\definecolor{blue}{rgb}{0,0,.0}
\newcommand{\change}[1]{{\color{blue} #1}}
\newtheorem{theorem}{Theorem}[section]
\newtheorem{definition}[theorem]{Definition}
\newenvironment{proof}
{\noindent
{\bf Proof.}}
{\hfill $\square$\medskip

}
\newcommand{\R}{\mathbb{R}}
\newcommand{\N}{\mathbb{N}}
\newcommand{\bc}{\mathbf{c}}
\newcommand{\bb}{\mathbf{b}}
\newcommand{\ba}{\mathbf{a}}
\newcommand{\br}{\mathbf{r}}
\newcommand{\bbr}{\mathbf{\bar r}}
\newcommand{\bbn}{\mathbf{\bar n}}
\newcommand{\bn}{\mathbf{n}}
\newcommand{\bs}{\mathbf{s}}
\newcommand{\bp}{\mathbf{p}}
\renewcommand{\deg}{\operatorname{deg}}
\newcommand{\supp}{\operatorname{supp}}
\begin{document}
\title{Trimmed Spline Surfaces with\\ Accurate Boundary Control}
\author{Florian Martin and Ulrich Reif}
\date{\today}
\maketitle


\begin{abstract}
We introduce trimmed NURBS surfaces with accurate boundary control,
briefly called ABC-surfaces, as a solution to the notorious
problem of constructing watertight or smooth ($G^1$ and $G^2)$ 
multi-patch surfaces within the function range of standard CAD/CAM 
systems
and the associated file exchange formats.
Our construction is based on the appropriate blend of 
a base surface, which traces out the intended global shape,
and a series of reparametrized ribbons, which dominate
the shape near the boundary.
\end{abstract}

\section{Introduction}
Trimmed NURBS surfaces are the  standard for industrial surface
modeling, today. This class of surfaces combines computational 
efficiency, geometric flexibility, and ease of use for the designer.
All current CAD/CAM systems and various file exchange formats
support this--and typically only this--method.
While the advantages of trimmed NURBS are undisputed, there is 
one severe and persistent shortcoming that has not found a commonly
accepted solution in decades: multi-patch arrangements of trimmed 
NURBS surfaces are typically discontinuous along common boundary 
curves of neighboring pieces. The problem is the following: Given a 
tensor product spline
surface $\bs : \R^2 \to \R^3$, current CAD systems provide two ways 
of \change{representing the boundaries of a trimmed NURBS surface}. 
Either one can use a planar spline curve 
$\gamma : \R \to \R^2$ to bound the domain or one 
can use a spatial spline curve $\bc : \R \to \R^3$ to 
bound the trace. The first variant leads to the boundary curve 
$\bs \circ \gamma$ in three-space. This is a spline curve, 
again, but typically, it has high degree and many knots,
stemming not only from the breaks
of $\gamma$ but also from all points where the trace of $\gamma$
intersects a knot line of $\bs$.
The resulting knot structure is uneven and hardly predictable%
\footnote{
Basically, the only case which does not lead to exploding complexity 
is the special situation where $\gamma$ parametrizes a
straight line parallel to one of the coordinate axes.}.
This fact makes it virtually 
impossible to match $\bs \circ \gamma$ with any other surface 
boundary at reasonable expense. The second variant is equally
discouraging. It seems to be impossible to find a surface 
$\bs$ and a curve $\bc$ such that the trace of $\bs$ contains 
the trace of $\bc$ exactly. The makeshift for this 
problem is to accept disjoint traces, and to
project $\bc$ onto $\bs$ in order to determine the 
actual trimming curve. In general, however, this curve is 
no longer a spline and again not suitable for a watertight
match with a neighboring patch. 

To avoid the described problem, one can try to find arrangements 
of patches for the desired shape which do without trimming.
This approach leads to 
methods like geometric continuity \change{\cite{peters}}, 
singular parametrization \change{\cite{reif_1997}}, or
subdivision \change{\cite{petersreif}}, 
and also to a variety of constructions for $n$-sided 
patches \change{\cite{garcia,varady_2016}}.
As a matter of fact, none of these concepts has found 
widespread acceptance in industrial CAD/CAM applications. 
It may be speculated 
that a certain perseverance of designers and executives
is responsible for that situation. However, there is also a more solid
explanation. The flow of parametric lines is most important for 
the design process%
\footnote{This observation is a blatant contradiction to
the concepts of differentiable geometry, which rely on the 
{\em independence} of geometric properties of the chosen 
parametrization.}.
On one hand, the parametric lines should follow 
the shape characteristics (e.g., ridges or curvature lines) of the 
desired surface to obtain convincing results.
On the other hand, a `natural' arrangement of control meshes 
greatly favors the intuitive character of interactive modeling 
techniques. Thus, there is little flexibility in the choice of
an adequate parametrization, and the corresponding patches must 
be trimmed at places where neighbors meet.
To sum up, the advantages of a well-adapted 
parametrization seem to be rated higher than the notorious 
trouble with the trimming issue.

In CAD/CAM applications, the usual solution to the problem of
nonmatching trimming curves
is to simply accept little gaps in surface representations, and to 
make them so small that they do not impair the quality of the product 
to be manufactured. This allows \change{modeling} surfaces
of adequate quality, but the price to pay is a certain inconvenience 
of the design process. The challenge is to design patches that are 
not only fair by themselves but join with neighbors within a
typically very small margin%
\footnote{Customary tolerances for traditional 
production techniques have to be decreased even 
further for modern 3d printing systems with their incredible 
attention to detail.}. 
Further, not only positions must be
matched, but possibly also surface normals and curvature tensors
in the sense of an approximate $G^1$- or $G^2$-continuity of
the composite surface.
For high quality surface design, as requested for instance in the 
automotive industry, the only way to achieve this is a
time-consuming manual process of iterated boundary adaption and 
fairing. 

Besides these practical considerations, there is another 
issue with the current state of technology. Surface models with 
disconnected pieces are not well suited
for simulation purposes. No matter how small the gaps are, 
they appear insurmountable%
\footnote{Advanced methods like discontinuous Galerkin may 
offer a remedy.}
for heat conduction, transmission
of forces and bending moments, or other physical processes. 
For this reason, multi-patch CAD models are usually converted to 
watertight meshes by elaborate procedures
to prepare them for simulation. Volumetric modeling techniques
as requested for Isogeometric Analysis with its potential
to save this effort are currently studied. However,
they seem to be quite far from a breakthrough for freeform
surface design in industrial applications.

From the abundance of material on the design of surfaces of 
arbitrary 
topology in general and trimmed NURBS in particular, we want 
to briefly mention only two recent approaches.
In \cite{marussig_2017}, a procedure for converting an assembly of 
trimmed 
NURBS patches to a watertight model is suggested.
It is based on local remodeling such that trimming 
curves become parametric lines. Without doubt, this is 
a reasonable practical procedure, even though some issues 
with a special kind of singularities appearing there 
(cf.~\cite{reif_2020})
and also general fairness aspects are still waiting for
a detailed assessment and analysis.
Another line of research
that has to be discussed in this context is documented in 
\cite{varady_2020, varady_2020a}. It is based on ideas similar 
those presented
here, but there are also significant differences. Before
we compare the methods, we briefly explain the basic 
principles of our approach.

\begin{figure}[t]
\begin{center}
\includegraphics[width=.45\textwidth]{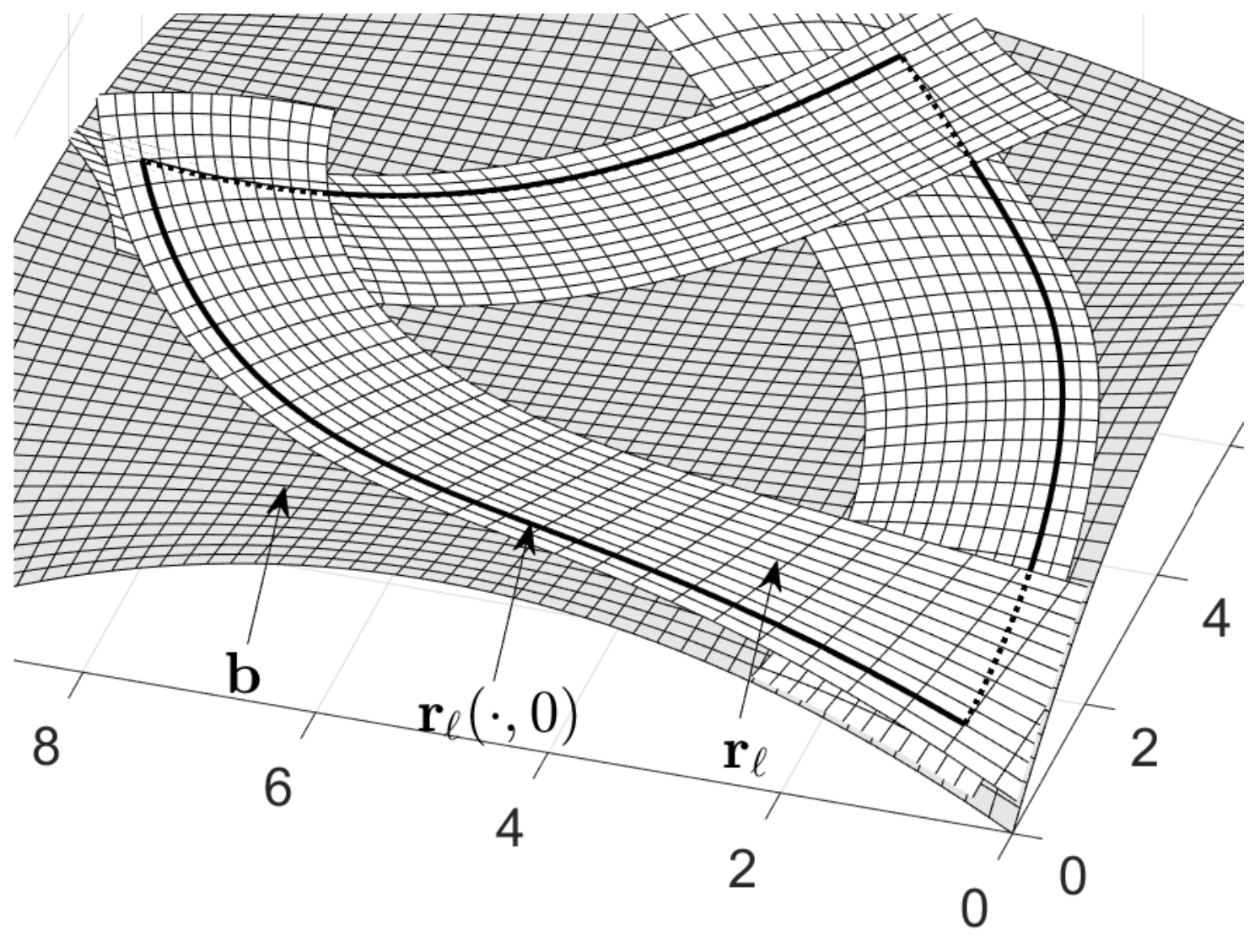}
\qquad 
\includegraphics[width=.45\textwidth]{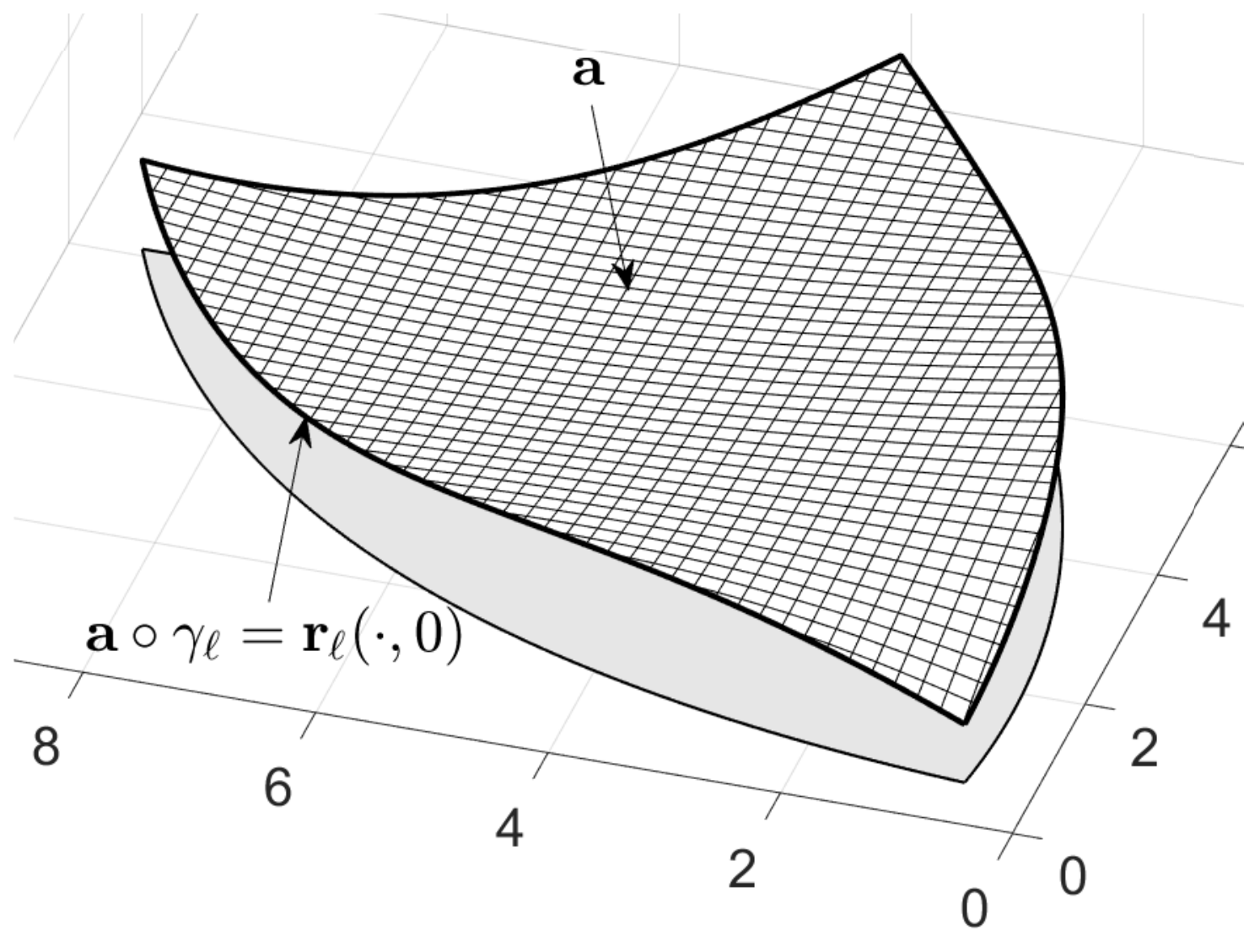}
\end{center}
\caption{Base $\bb$ and ribbons $\br_\ell$ {\it (left)};
ABC-surface $\ba$ {\it (right)}.}
\label{ABCSurface}
\end{figure}
Let $\bb$ be a spline surface,
called {\em base}, and $\br_1,\dots,\br_L$ be a sequence
of other spline surfaces, called {\em ribbons}. Then we 
define the 
{\em ABC-surface}
\[
 \ba := \frac
 {w \cdot \bb + \sum_{\ell=1}^L w_\ell\cdot \br_\ell \circ 
\kappa_\ell}
 {w + \sum_{\ell=1}^L w_\ell}
\]
with suitable reparametrizations $\kappa_\ell$ and weight functions 
$w$ and $w_\ell$. 
Figure~\ref{ABCSurface} illustrates the setting.
\change{It is instructive, but not necessary, to think of $\bb$ as a 
standard spline surface describing the overall shape to 
be designed. When trimmed, its boundaries are close to the desired 
behavior, but do not match exactly. The ribbons $\br_\ell$ are 
employed to adjust that deviation.}

\change{The domain $\Gamma\subset\R^2$ of $\ba$} is assumed to 
be bounded by $L$ smooth segments $\Gamma_1,\dots,\Gamma_L$
meeting at corners $\sigma_1,\dots,\sigma_L$.
The base $\bb$ describes the intended shape globally on $\Gamma$,
and each ribbon $\br_\ell$ describes that shape locally in a vicinity 
of the segment $\Gamma_\ell$ of the boundary. This is done in such a 
way that the 
spline curve $\bc_\ell := \br_\ell(\cdot,0)$ parametrizes the 
segment $\ba(\Gamma_\ell)$ of the boundary of the trace of $\ba$.
The reparametrizations $\kappa_\ell$ are used to reconcile 
the otherwise unrelated surface representations, and the 
weights are used for blending. When approaching the boundary,
$w$ vanishes and makes the base disappear. Further, near the
boundary segment $\Gamma_j \subset \R^2$ corresponding to 
$\bc_j$ all weights $w_\ell$ except
for $w_j$ vanish so that also all ribbons but $\br_j$
are faded out. Hence, this surface dominates the behavior 
of $\ba$ near the boundary in a way that will be made 
precise below. The {\em accurate boundary 
control} of the surface $\ba$ by means of ribbons explains
the acronym {\em ABC-surface}.

The basic principle and some technical details of ABC-surfaces were 
first presented in \cite{reif_2019}. Roughly at the same time 
and independent of our research, another group developed
similar ideas \cite{varady_2020a}, which were recently 
brought to
our attention. The formal similarities are obvious, but there
are also differences, and one of them is significant.

First, the weighted base $w \bb$ does not appear in that
form%
\footnote{It does appear in a similar form in \cite{varady_2012}
for the special case of straight boundaries and polynomial
ribbons.}  in \cite{varady_2020a}.
Instead, a linear combination
$\sum_j b_j \bp_j$ of control points $\bp_j \in \R^3$
and B-splines $b_j$ is used, where only B-splines 
with support completely contained in the domain of the 
surface are permitted so that there is no influence on
the shape near the boundary. Accordingly, the term 
$w$ in the denominator is replaced by $\sum_j b_j$.
This can be considered a minor technical detail, but 
it also reveals a slightly different, though equally valid
philosophy behind the construction. We think of the 
base as a relatively accurate description of the entire surface, 
and of the 
ribbons as small, local correctives, used only to 
adapt $\bb$ to the desired boundary behavior.
By contrast, the ribbons in \cite{varady_2020a} are the major 
building blocks for the shape to be modeled.
The \change{summands} $b_j \bp_j$, if included at all,
are only used for 
variations inside if the shape defined by the ribbons 
alone does not yet meet expectations.

Second, and this difference is much more important,
the surfaces constructed in \cite{varady_2020,varady_2020a} are 
{\em transfinite} in the sense that they do not possess
a closed-form representation by means of elementary
functions. In particular, these surfaces are not piecewise
polynomial or rational and thus cannot be stored in standard file 
formats, as requested in many applications.
The point is that the curves $\gamma_\ell$ bounding the 
domain $\Gamma$ are defined {\em explicitly} as splines.
Once this is done, there is no way to continue the construction
within the spline framework. For instance, it is in 
general impossible to find splines $w_\ell$ that vanish 
on all but one of these boundary curves%
\footnote{
The special case 
of straight boundaries, which admits rational constructions,
was discussed in \cite{varady_2012, varady_2014, varady_2016, 
varady_2017}.}.
In \cite{varady_2020, varady_2020a}, it is suggested to 
construct reparametrizations and 
weights as solutions of variational problems with the curves 
$\gamma_\ell$ providing the boundary conditions. This requests a
certain computational effort, but visually, the results appear to be
excellent.
By contrast, in our construction, the boundary curves $\gamma_\ell$ 
are defined {\em implicitly} as zero levelsets of spline functions.
That way, the problem described above disappears. It may be objected
that explicit evaluation of the boundary of the domain is
now much more expensive than it was before, but as a matter of fact,
this is hardly ever requested in practice as long as the spatial 
boundary curves $\bc_\ell$ are available.

The paper is organized as follows:
In the next section, we introduce and analyze the basic principle for 
the construction of ABC-surfaces in a very general setting. Three
theorems concerning the behavior of positions, normals, and 
curvature tensors at the boundary are provided. They establish
the validity of our construction, but also that of the methods 
presented in the more application-oriented papers 
\cite{varady_2014, varady_2020, varady_2020a}.
In Section~3, we specialize the principle to the spline setting
and describe in detail how to construct adequate reparametrizations 
and weights from given bases and ribbons. 
Section~4 addresses some important technical issues related to the
limitations of current standard formats for file exchange with 
respect to trimming. Finally, in Section~5, we present a few examples 
of ABC-spline surfaces to demonstrate the potential of our method.

\section{Basic principle}
Throughout, boldface letters denote points and functions with values 
in $\R^3$, while greek letters denote points and functions with 
values in $\R^2$.
Except for the surface $\ba$ to be constructed,
all functions appearing below are assumed to be sufficiently
smooth. In particular, their derivatives
mentioned anywhere (explicitly or implicitly through notions 
like normals or curvatures) are assumed to exist and to be continuous.
Further, to avoid a discussion of degenerate cases, surfaces are 
always assumed to be immersed, i.e., the partial derivatives are 
linearly independent everywhere.
The following definitions are crucial for the construction of
ABC-surfaces:

\begin{figure}[t]
\begin{center}
\includegraphics[width=.45\textwidth]{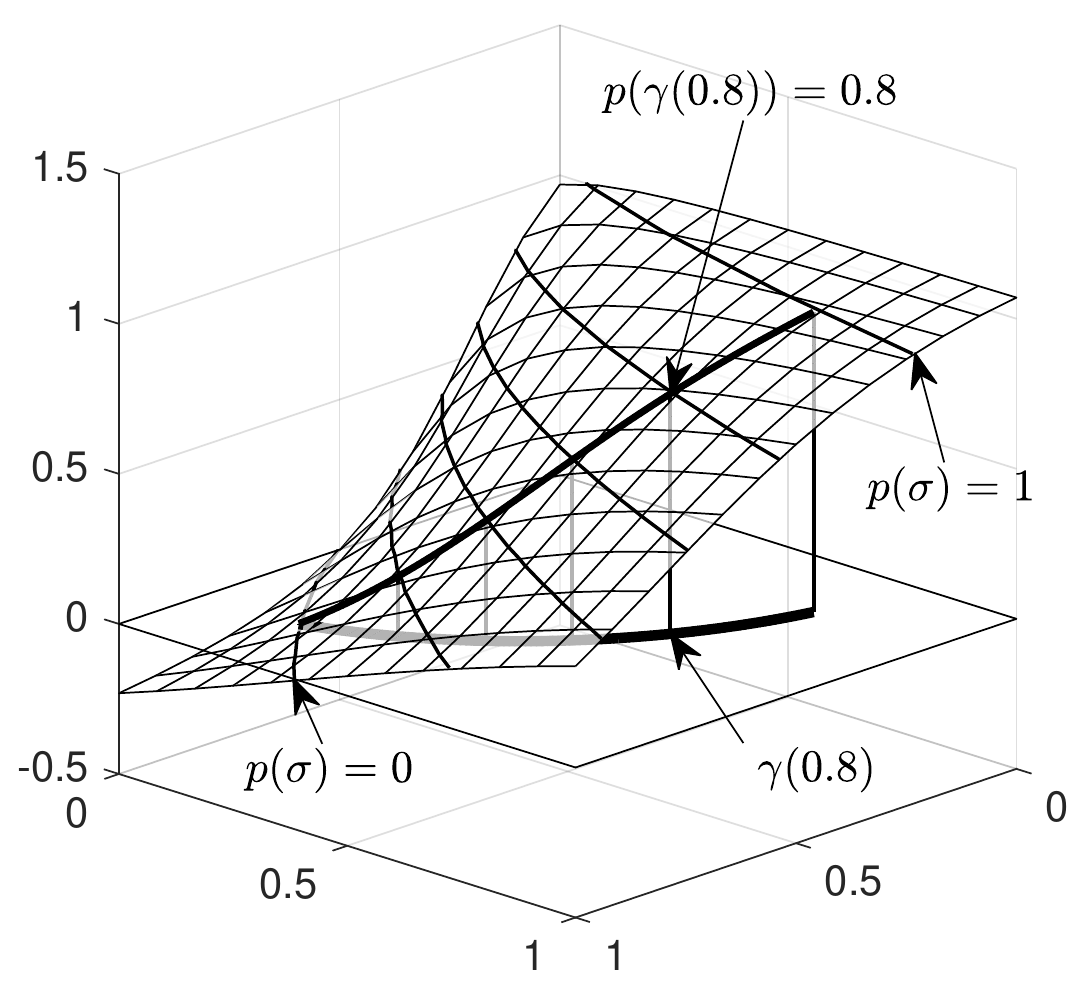}
\qquad 
\includegraphics[width=.45\textwidth]{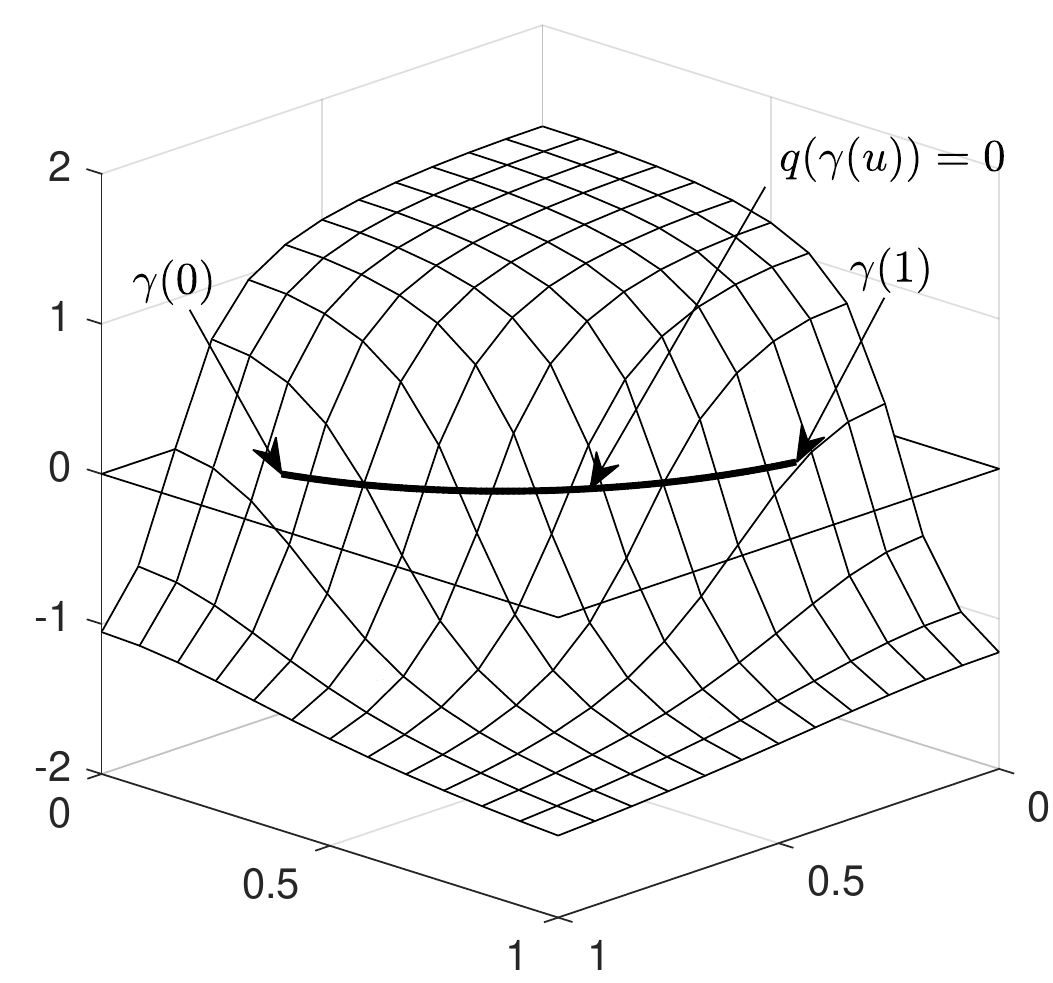}
\end{center}
\caption{Parametrizing function $p$ {\it (left)}
and implicit representation $q$ {\it (right)}.}
\label{Para+Implicit}
\end{figure}
\begin{definition}
\label{def:kappa}
The function $p : \R^2 \to \R$ {\em parametrizes} the planar 
curve $\gamma : [0,1] \to \R^2$ if $p(\gamma(u)) = u$ for all $u \in 
[0,1]$, see Figure~\ref{Para+Implicit} {\it (left)}.
The function $q: \R^2 \to \R$ is an {\em implicit 
representation} of the planar curve $\gamma$ if $q(\gamma(u)) = 0$ 
for all 
$u \in [0,1]$, see Figure~\ref{Para+Implicit} {\it (right)}.
The function
$\kappa := [p,q] : \R^2 \to \R^2$ {\em defines} $\gamma$ if there 
exists a unique regular curve $\gamma : [0,1] \to \R^2$ with
implicit representation $q$ that is parametrized by $p$.
\end{definition}
Now, we describe the construction of a surface $\ba$, whose domain 
$\Gamma \subset\R^2$ of 
definition is bounded by a sequence $\gamma_1,\dots,\gamma_L$ of 
planar {\em trimming curves} $\gamma_\ell : [0,1] \to \R^2$.
When concatenated according to
\begin{equation}
\label{eq:gamma_loop}
 \gamma_{\ell-1}(1) = \gamma_\ell(0)
 ,
\end{equation}
these curves form a closed loop without self-intersections.
Here and below, the index $\ell$ always runs from $1$ through $L$ and 
is understood modulo $L$. Let
\[
 \sigma_\ell := \gamma_\ell(0)
 \quad\text{and}\quad 
 \Gamma_\ell := \{\gamma_\ell(u) : 0 \le u <1\}
 ,
\]
denote the {\em corners} and {\em boundary segments}
of $\Gamma$, respectively. Formally, the endpoint $\gamma_\ell(1)$ 
is 
missing in the definition of $\Gamma_\ell$ to obtain a disjoint
union $\partial \Gamma = \bigcup_\ell \Gamma_\ell$ for the boundary 
of $\Gamma$. Accordingly, for each boundary point $\beta \in \partial 
\Gamma$, there exist unique values $\ell$ and 
$u \in [0,1)$ such $\beta = \gamma_\ell(u)$.

\begin{figure}[t]
\begin{center}
\includegraphics[width=.45\textwidth]{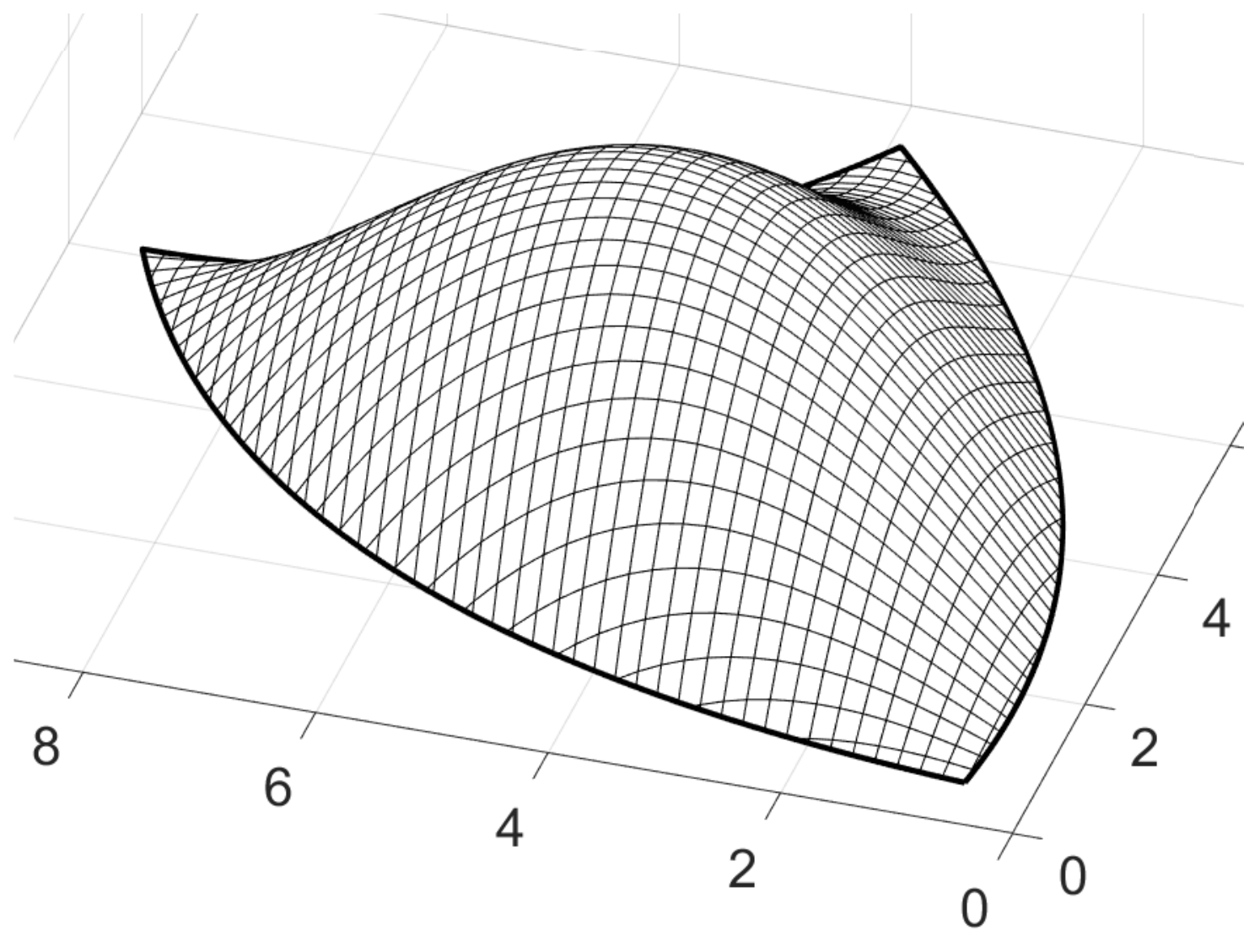}
\qquad 
\includegraphics[width=.45\textwidth]{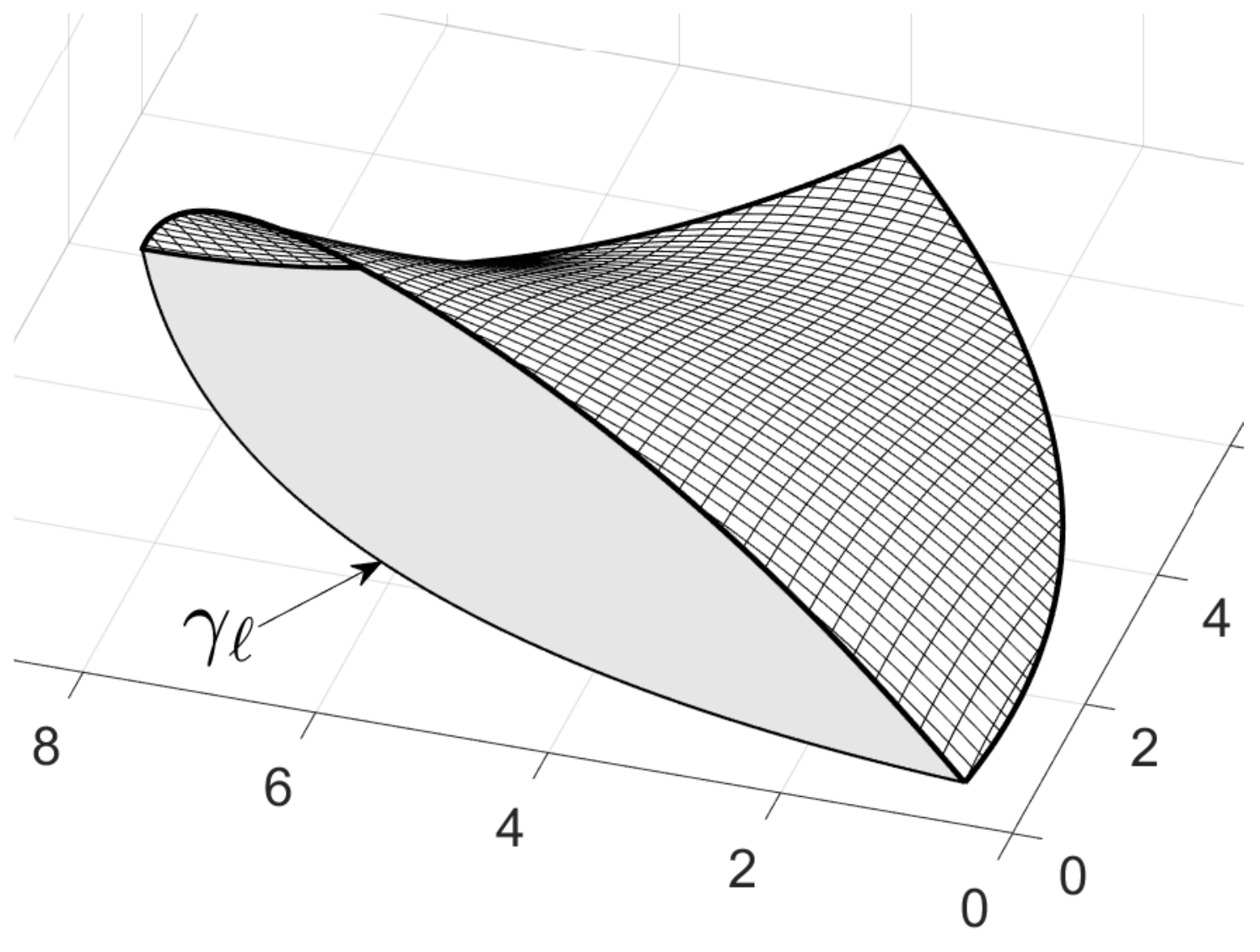}
\end{center}
\caption{Weights $w$ {\it (left)} and $w_\ell$ {\it (right)}.}
\label{Weights}
\end{figure}

The building blocks for the construction of the surface $\ba$ are the 
following:
\begin{itemize}
\item
Let $\kappa_\ell := [p_\ell,q_\ell] : \R^2 \to 
\R^2$ be functions that define trimming curves $\gamma_\ell$ 
in the sense of Definition~\ref{def:kappa}
according to the setting described above.
\item 
Let $w : \R^2 \to \R$ be an implicit 
representation of all trimming curves. That is,
\[
 w\circ \gamma_j = 0
 ,\quad 
 j = 1,\dots,L
 .
\]
Further, it is assumed
that $w$ is positive on the interior of $\Gamma$,
see Figure~\ref{Weights} {\it (left)}.
\item
For each $\ell$,
let $w_\ell:\R^2 \to \R^2$ 
be an implicit representation of all  trimming curves but 
$\gamma_\ell$.
That is,
\[
 w_\ell\circ\gamma_j = 0
 ,\quad 
 j = 1,\dots,\ell-1,\ell+1,\dots,L
 .
\]
Further, it is assumed that $w_\ell$ is non-negative on $\Gamma$, and 
that
the only zeros of the sum $\sum_\ell w_\ell$ in $\Gamma$ are the 
corners,
see Figure~\ref{Weights} {\it (right)}.
\item
Let $\bb : \R^2 \to \R^3$ be a surface, called 
the {\em base}.
Unit normal
and curvature tensor of $\bb$ are denoted by
$\bn_\bb$ and $E_\bb$, respectively.
\item 
For $\ell=1,\dots,L$, let
$\br_\ell : \R^2 \to \R^3$ be surfaces, called 
{\em ribbons}. Unit normal
and curvature tensor of $\br_\ell$ are denoted by
$\bn_\ell$ and $E_\ell$, respectively. 
\end{itemize}
With these ingredients, we construct a new surface that 
combines base and ribbons in a specific way:

\begin{definition}
\label{def:ABC}
The {\em ABC-surface} $\ba : \Gamma \to \R^3$ corresponding to 
the data $\bb,\br_\ell,w,w_\ell,\kappa_\ell$, as defined above,
is given by
\begin{equation}
\label{eq:ABC}
 \ba := \frac
 {w \cdot \bb + \sum_{\ell=1}^L w_\ell\cdot \br_\ell \circ 
\kappa_\ell}
 {w + \sum_{\ell=1}^L w_\ell}.
\end{equation}
At corners, where the right hand
side is not yet well defined, let
\[
 \ba(\sigma_\ell) := \br_\ell(0,0)
 ,\quad 
 \ell = 1,\dots,L
 .
\]
\end{definition}
For the sake of simplicity, 
this definition is restricted to the case that the domain $\Gamma$
is simply connected and has always $L$ corners and boundary segments.
Generalizations to multiply-connected domains and to 
disc-like surfaces bounded by one smooth, periodic
curve will be discussed in a forthcoming report.

Two basic elements characterize the construction in the definition 
above. First, 
the functions $\kappa_\ell$, which define the trimming
curves, are used as {\em repara\-metri\-zations} for the ribbons 
$\br_\ell$ to 
adapt them to the otherwise unrelated base $\bb$. 
Second, near the 
boundary segment $\Gamma_j$ the \change{various} implicit 
representations \change{$w$ and $w_\ell$}
are used as {\em weights} to fade out the influence of
the base $\bb$ and all ribbons $\br_\ell$ 
except for $\br_j$. The ribbon $\br_j$ becomes
dominant and determines the shape of $\ba$. Thus, the construction
provides
{\em accurate boundary control} for $\ba$, giving rise to the acronym
{\em ABC-surface}. For a more detailed assessment of this raw
idea, we need a further definition:
\begin{definition}
\label{def:contact}
Let $\ba$ be an ABC-surface according to Definition~\ref{def:ABC}.
It has {\em contact order $k$} at the boundary segment $\Gamma_j$ if
for any $\sigma = \gamma_j(u) \in \Gamma_j$ it holds
\[
 \bar w(\tau) := \frac{w(\sigma+\tau)}{w_j(\sigma+\tau)} = 
 O(|\tau|^{k+1})
\]
and, unless $\ell  = j$ or $\sigma=\sigma_j$ and $\ell=j-1$, also
\[
 \bar w_\ell(\tau) := \frac{w_\ell(\sigma+\tau)}{w_j(\sigma+\tau)} = 
 O(|\tau|^{k+1})
 .
\]
\end{definition}
Further introducing the notation
\[
  \bbr_\ell(\tau) := \br_\ell(\kappa_\ell(\sigma+\tau))
\] 
for the reparametrized ribbons, $\ba$ can be written in the form 
\begin{equation}
\label{eq:atau}
 \ba(\sigma+\tau) = 
 \frac{\bar w(\tau)\bb(\sigma+\tau) + \sum_\ell \bar 
w_\ell(\tau)\bbr_\ell(\tau)}
 {\bar w(\tau) + \sum_\ell \bar w_\ell(\tau)}
 .
\end{equation}
Here and below, the variable $\sigma$ is always assumed to be fixed
and mostly omitted in the notation for the sake of simplicity.
Unit normal $\bbn_\ell$ and 
curvature tensor $\bar E_\ell$ of $\bbr_\ell$ coincide with 
the respective quantities of $\br_\ell$ since these surfaces
are related by reparametrization,
\begin{equation}
\label{eq:n=n}
 \bbn_\ell(\tau) = \bn_\ell(\kappa_\ell(\sigma+\tau))
 ,\quad 
 \bar E_\ell(\tau) = E_\ell(\kappa_\ell(\sigma+\tau))
 .
\end{equation}
In the interior of $\Gamma$, $\ba$ is a blend of smooth surfaces and 
thus can be expected to be smooth, too. As always, exceptions may 
occur in case of an incidental loss of regularity. We will not 
discuss such degenerate cases, but focus on the behavior of $\ba$ 
at the boundary. The following three theorems clarify the situation:
\begin{theorem}[$G^0$-boundary]
\label{thm:G0}
Consider an ABC-surface $\ba$ according to Definition~\ref{def:ABC}
with contact order $k=0$ at $\Gamma_j$.
If the ribbons are consistent according to 
\[
 \br_{j-1}(1,0) = \br_j(0,0)
 ,
\]
then $\ba$ is continuous 
at $\Gamma_j$, and it holds 
\begin{equation}
\label{eq:G0}
 \ba(\gamma_j(u)) = \br_j(u,0)
 ,\quad 
 u \in [0,1]
 .
\end{equation}
\end{theorem}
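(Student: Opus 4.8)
The plan is to pass to the normalized form \eqref{eq:atau}, in which the weight of the $j$-th ribbon has been scaled to the constant $\bar w_j \equiv 1$, and to read off the boundary limit as $\tau \to 0$. The first step is to locate the value carried by the dominant ribbon. Since $\kappa_j = [p_j,q_j]$ defines $\gamma_j$, Definition~\ref{def:kappa} gives $p_j(\gamma_j(u)) = u$ and $q_j(\gamma_j(u)) = 0$, so that $\kappa_j(\gamma_j(u)) = (u,0)$ and hence
\[
 \bbr_j(0) = \br_j(\kappa_j(\sigma)) = \br_j(u,0).
\]
Thus the $j$-th reparametrized ribbon already carries the desired boundary value, and everything reduces to showing that the remaining summands fade out as $\tau\to0$.

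For an interior point $\sigma = \gamma_j(u)$ with $u \in (0,1)$, contact order $k=0$ forces $\bar w(\tau) = O(|\tau|)$ and $\bar w_\ell(\tau) = O(|\tau|)$ for every $\ell \neq j$. All weights except the constant $1$ therefore vanish, the denominator in \eqref{eq:atau} tends to $1$, and—since $\bb$ and the $\bbr_\ell$ stay bounded by continuity—the quotient tends to $\bbr_j(0) = \br_j(u,0)$. This already yields \eqref{eq:G0} at interior points.

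The main obstacle is the corner $\sigma = \sigma_j$, i.e.\ $u=0$, where Definition~\ref{def:contact} deliberately exempts $\ell = j-1$: the ratio $\bar w_{j-1}(\tau)$ need neither vanish nor even remain bounded as $\tau \to 0$, so the naive limit argument breaks down. The decisive structural fact is that all weights are non-negative, so by \eqref{eq:ABC} the value $\ba(\sigma_j+\tau)$ is a genuine convex combination
\[
 \ba(\sigma_j+\tau) = \lambda(\tau)\,\bb + \sum_\ell \lambda_\ell(\tau)\,\bbr_\ell, \quad \lambda + \sum_\ell \lambda_\ell = 1,
\]
with $\lambda = \bar w/D$ and $\lambda_\ell = \bar w_\ell/D$, where $D = \bar w + \sum_\ell \bar w_\ell \ge 1 + \bar w_{j-1}$. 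Hence $\lambda \le \bar w \to 0$ and $\lambda_\ell \le \bar w_\ell \to 0$ for every $\ell \neq j,j-1$, while $\lambda_j$ and $\lambda_{j-1}$ stay in $[0,1]$ (the denominator dominates the exempt numerator $\bar w_{j-1}$) and satisfy $\lambda_j + \lambda_{j-1} \to 1$. Since $\bb$ and the $\bbr_\ell$ are bounded, this collapses to
\[
 \ba(\sigma_j+\tau) = \lambda_j(\tau)\,\bbr_j(\tau) + \lambda_{j-1}(\tau)\,\bbr_{j-1}(\tau) + o(1),
\]
a convex combination of $\bbr_j(\tau)$ and $\bbr_{j-1}(\tau)$ regardless of how large $\bar w_{j-1}$ becomes.

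This is where the consistency hypothesis enters. At the corner, $\kappa_j(\sigma_j) = (0,0)$ yields $\bbr_j(0) = \br_j(0,0)$, while $\sigma_j = \gamma_{j-1}(1)$ gives $\kappa_{j-1}(\sigma_j) = (1,0)$ and thus $\bbr_{j-1}(0) = \br_{j-1}(1,0)$; by $\br_{j-1}(1,0) = \br_j(0,0)$ the two limiting ribbon values coincide. A convex combination of two quantities tending to the same limit tends to that limit, so $\ba(\sigma_j+\tau) \to \br_j(0,0) = \ba(\sigma_j)$ from Definition~\ref{def:ABC}, and $\ba$ is continuous up to and including the corner. Finally, the endpoint $u=1$ is the corner $\sigma_{j+1} = \gamma_j(1)$; applying the same argument there, or the consistency condition with index shifted by one, gives $\br_j(1,0) = \br_{j+1}(0,0) = \ba(\sigma_{j+1})$, so \eqref{eq:G0} holds on the whole interval $[0,1]$.
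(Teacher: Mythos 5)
Your proposal is correct and follows essentially the same route as the paper: at interior points of $\Gamma_j$ the decay $\bar w,\bar w_\ell=O(|\tau|)$ isolates $\bbr_j$, and at the corner the value is a convex combination of $\bbr_j$ and $\bbr_{j-1}$ (your $\lambda_j,\lambda_{j-1}$ are the paper's $h,h'$ up to the $O(|\tau|)$ remainder), which converges by the consistency hypothesis. Your explicit derivation of $\bbr_j(0)=\br_j(u,0)$ from Definition~\ref{def:kappa} and your remark on the endpoint $u=1$ only make explicit what the paper leaves implicit.
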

In the following proofs, it is convenient to use the 
abbreviations
\[
 h(\tau) := \frac{1}{1+\bar w_{j-1}(\tau)}
 =
 \frac{w_j(\sigma_j+\tau)}{w_{j-1}(\sigma_j+\tau)+w_j(\sigma_j+\tau)}
 ,\quad 
 h'(\tau) := 1-h(\tau)
 ,
\]
to expand $\ba$ in a vicinity of the corner $\sigma_j$.
Since weights are assumed to be nonnegative, it 
is $h(\tau) \in [0,1]$.
For contact order $k = r-1$, we obtain
\begin{align}
\notag
 \ba(\sigma_j+\tau) &= 
 \frac{\bbr_j(\tau) + \bar w_{j-1}(\tau) 
 \bbr_{j-1}(\tau) + O(|\tau|^r)}{1+\bar w_{j-1}(\tau) + O(|\tau|^r)}\\
 \label{eq:h}
 &=
 h(\tau)\bbr_j(\tau) + h'(\tau)\bbr_{j-1}(\tau) + O(|\tau|^r)
 .
\end{align}

\begin{proof}
Consider the point $\sigma = \gamma_j(u) \in \Gamma_j$
and $\ba$ in the form \eqref{eq:atau}.
We note that $\bbr_{j}(\tau) = \br_j(u,0) + O(|\tau|)$ and,
by consistency, $\bbr_{j-1}(\tau) = \br_j(0,0) + O(|\tau|)$.
For $u >0$, \eqref{eq:G0} follows from
\[
 \ba(\sigma+\tau) = 
 \frac{\bbr_j(\tau) + O(|\tau|)}{1+O(|\tau|)}
 = \bbr_j(\tau) + O(|\tau|)
 = \br_j(u,0) + O(|\tau|)
 .
\]
For $u=0$, \eqref{eq:h} yields
\[
 \ba(\sigma+\tau) = 
 h(\tau)\bbr_j(\tau) + h'(\tau)\bbr_{j-1}(\tau) + O(|\tau|)
 = \br_j(0,0) + O(|\tau|)
 .
\]
\end{proof}
In the following, $\partial_1 f,\partial_2 f$ denote the partial 
derivatives of the bivariate function $f : \R^2 \to \R^n$, 
respectively.
$Df := [\partial_1 f\ \partial_2 f] : \R^2 \to \R^{n \times 2}$ 
denotes the Jacobian.
\begin{theorem}[$G^1$-boundary]
\label{thm:G1}
Consider an ABC-surface $\ba$ according to Definition~\ref{def:ABC}
with contact order $k=1$ at $\Gamma_j$.
If the conditions of Theorem~\ref{thm:G0} are satisfied, if
the unit normals of ribbons are consistent according to 
\[
 \bn_{j-1}(1,0) = \bn_j(0,0)
 ,
\]
and if the regularity condition
\begin{equation}
 \label{eq:regular}
 \operatorname{rank}
 \bigl(
 t D\bbr_{j}(0) + (1 - t) D\bbr_{j-1}(0)\bigr) = 2
\end{equation}
is satisfied for all $t \in [0,1]$, then
$\ba$ has a well-defined unit normal $\bn_\ba$
at $\Gamma_j$, and it holds 
\begin{equation}
\label{eq:G1}
 \bn_\ba(\gamma_j(u)) = \bn_j(u,0)
 ,\quad 
 u \in [0,1]
 .
\end{equation}
\end{theorem}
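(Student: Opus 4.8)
The plan is to track the Jacobian $D\ba$ as $\tau\to0$ at a fixed boundary point $\sigma=\gamma_j(u)$ and to read off the unit normal from $\bn_\ba=(\partial_1\ba\times\partial_2\ba)/|\partial_1\ba\times\partial_2\ba|$. Set $\bn^*:=\bn_j(u,0)$; by \eqref{eq:n=n} together with $\kappa_j(\gamma_j(u))=(u,0)$ this is the unit normal of $\bbr_j$ at $\tau=0$. As in Theorem~\ref{thm:G0}, the interior of the segment and the corner behave differently. For $u>0$ one has $w_j(\sigma)>0$, so contact order $k=1$ makes every quotient $\bar w$ and $\bar w_\ell$ with $\ell\ne j$ equal to $O(|\tau|^2)$; then \eqref{eq:atau} collapses to $\ba(\sigma+\tau)=\bbr_j(\tau)+O(|\tau|^2)$, and differentiating gives $D\ba(\sigma)=D\bbr_j(0)$. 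Hence $\ba$ and $\bbr_j$ share tangent plane and normal at $\sigma$, which is $\bn_j(u,0)$ by \eqref{eq:n=n}. This settles \eqref{eq:G1} away from the corner.

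The corner $u=0$ is the substantive case, handled through the blend \eqref{eq:h}, namely $\ba=h\bbr_j+(1-h)\bbr_{j-1}+O(|\tau|^2)$ with $h\in[0,1]$. I would first show that the tangent plane of $\ba$ tends to $\bn^{*\perp}$. Put $\phi(\tau):=\bn^*\cdot(\bbr_j-\bbr_{j-1})$. Position consistency gives $\phi(0)=0$, and normal consistency gives $\bn^*\cdot\partial_i\bbr_j(0)=\bn^*\cdot\partial_i\bbr_{j-1}(0)=0$, whence $D\phi(0)=0$ and $\phi=O(|\tau|^2)$. Since $w_j$ and $w_{j-1}$ vanish to first order at $\sigma_j$, the quotient defining $h$ is homogeneous of degree zero to leading order, so $Dh=O(1/|\tau|)$. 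Writing $\bn^*\cdot\ba=\bn^*\cdot\bbr_{j-1}+h\,\phi+O(|\tau|^2)$ and differentiating, the contributions $\bn^*\cdot\partial_i\bbr_{j-1}=O(|\tau|)$, $h\,\partial_i\phi=O(|\tau|)$, and $(\partial_i h)\,\phi=O(1/|\tau|)\cdot O(|\tau|^2)=O(|\tau|)$ all tend to zero. Thus $\bn^*\cdot\partial_i\ba\to0$ and the tangent plane converges to $\bn^{*\perp}$.

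The main obstacle is to guarantee that $\ba$ remains immersed at $\sigma_j$, so that a normal exists at all. Here $Dh$ genuinely blows up like $1/|\tau|$, so the rank-one term $(\bbr_j-\bbr_{j-1})\,Dh$ in $D\ba$ is only $O(1)$ and enters the Jacobian at leading order. I would pass to the tangent-cone approximation $\ba(\sigma_j+\tau)\approx\br_j(0,0)+M_\omega\tau$, where $M_\omega:=t_\omega D\bbr_j(0)+(1-t_\omega)D\bbr_{j-1}(0)$ and $t_\omega:=\lim h\in[0,1]$ depends on the approach direction $\omega=\tau/|\tau|$. The regularity condition \eqref{eq:regular}, imposed for all $t\in[0,1]$ and not merely at the endpoints, is exactly what makes each $M_\omega$ of rank $2$; as its columns lie in $\bn^{*\perp}$, the induced normal is $\pm\bn^*$, and since $t\mapsto M_\omega$ is a regular homotopy joining $D\bbr_{j-1}(0)$ to $D\bbr_j(0)$, both orienting $+\bn^*$, the sign is pinned to $+\bn^*$. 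The delicate point, and the technical heart of the proof, is to verify that the blowing-up angular part of $Dh$, paired against $\bbr_j-\bbr_{j-1}=O(|\tau|)$, stays within $\bn^{*\perp}$ (being a cross product of in-plane vectors) and does not drive $\partial_1\ba\times\partial_2\ba$ to zero. Once the cross product is bounded away from zero, the unit normal tends to $\bn^*=\bn_j(0,0)$, completing \eqref{eq:G1}.
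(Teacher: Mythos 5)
Your architecture matches the paper's: the interior case $u>0$ is settled exactly as in the paper by $\ba(\sigma+\tau)=\bbr_j(\tau)+O(|\tau|^2)$, and the corner is attacked through the expansion \eqref{eq:h}, the convex combination $hD\bbr_j(0)+h'D\bbr_{j-1}(0)$, the rank condition \eqref{eq:regular}, and a connectedness argument pinning the orientation. Your estimate $\bn^*\cdot\partial_i\ba\to 0$ via $\phi=O(|\tau|^2)$ and $(\partial_i h)\phi=O(|\tau|)$ is correct and in fact more explicit than what the paper records.

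The problem is the step you flag as ``the technical heart of the proof'' and then do not carry out. At the corner, $D\ba(\sigma_j+\tau)=hD\bbr_j(0)+h'D\bbr_{j-1}(0)+(\bbr_j-\bbr_{j-1})Dh+O(|\tau|)$. The paper disposes of the third term with the single estimate $(\bbr_j-\bbr_{j-1})Dh=O(|\tau|)$, i.e.\ by treating $Dh$ as bounded; you instead take $Dh=O(1/|\tau|)$ (which is the generic behavior for the product weights of Section~3, since $h\equiv 1$ on $\gamma_j$ and $h\equiv 0$ on $\gamma_{j-1}$, two arcs meeting at $\sigma_j$, so $Dh$ cannot stay bounded there), whence the term is only $O(1)$ and survives into the leading part of the Jacobian. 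Your proposed remedy, the tangent-cone matrix $M_\omega=t_\omega D\bbr_j(0)+(1-t_\omega)D\bbr_{j-1}(0)$, simply drops that term again: \eqref{eq:regular} controls $\operatorname{rank}M_\omega$, not the rank or orientation of $M_\omega+(\bbr_j-\bbr_{j-1})Dh$. Concretely, with $P$ the projection onto the common tangent plane, the $\bn^*$-component of $\partial_1\ba\times\partial_2\ba$ is $\det(PM)+Dh\,\operatorname{adj}(PM)\,P(\bbr_j-\bbr_{j-1})+O(|\tau|)$; the second summand is a direction-dependent $O(1)$ quantity driven by $D\bbr_j(0)-D\bbr_{j-1}(0)$, which \eqref{eq:regular} says nothing about. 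Your parenthetical that this contribution ``stays within $\bn^{*\perp}$'' is backwards: being a cross product of two asymptotically tangential vectors, it is \emph{parallel} to $\bn^*$, i.e.\ it perturbs exactly the determinant whose positivity you need. So the closing sentence ``once the cross product is bounded away from zero'' assumes what remains to be proved. To finish, you must either justify the paper's bound $(\bbr_j-\bbr_{j-1})Dh=O(|\tau|)$ or show that the extra determinant term cannot cancel $\det(PM)$ --- for instance under the additional hypothesis \eqref{eq:Dr=Dr}, which makes it vanish identically but is only assumed in Theorem~\ref{thm:G2}.
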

\begin{proof}
Analogous to the previous proof, we obtain for $u>0$
\[
 \ba(\sigma+\tau) = 
 \frac{\bbr_j(\tau) + O(|\tau|^2)}{1+O(|\tau|^2)}
 = \bbr_j(\tau) + O(|\tau|^2)
 .
\]
Differing only by second order terms, the surfaces
$\ba(\sigma+\cdot)$ and $\bbr$ have the same normal at the origin.
Hence, using \eqref{eq:n=n}, we obtain
\[
 \bn_\ba(\sigma) = \bbn_j(0) = \bn_j(u,0)
 .
\]
Below, we mostly omit not only \change{notating} $\sigma$, 
but also the argument $\tau$.
For $u=0$, corresponding to the case $\sigma=\sigma_j$,
\eqref{eq:h} yields the derivative
\[
 D\ba(\sigma_j+\tau) = 
 h D\bbr_j + h' D\bbr_{j-1}
 + (\bbr_j-\bbr_{j-1}) Dh + O(|\tau|)
 .
\]
By consistency of ribbons, it is 
$\bbr_j-\bbr_{j-1} = O(|\tau|)$ so that 
\[
 D\ba(\sigma_j+\tau) = 
 h D\bbr_j(0) + h' D\bbr_{j-1}(0)
 + O(|\tau|)
 .
\]
Let us consider the leading matrix, defined by the first two 
summands. Since $h \in [0,1]$,
it is a convex combination of $D\bbr_j(0)$ and $D\bbr_{j-1}(0)$.
By the regularity condition, it has rank $2$, and by consistency of
normal vectors, both columns are perpendicular to 
$\bn_j(0,0)$. Hence, normalizing the cross product of these columns
and passing to the limit $\tau \to 0$
yields $\bn_\ba(\sigma_j) = \bn_j(0,0)$, as stated.
Clearly, in general, the derivative $D\ba(\sigma_j+\tau)$ itself
does not possess a limit for $\tau \to 0$.
\end{proof}
We note that the regularity condition \eqref{eq:regular} is purely 
technical and typically satisfied. It 
rules out a degenerate situation, which can occur in a similar way 
anywhere in any common surface modeling system for an unfortune 
choice of control data. 

The crucial conditions for the
existence of normal vectors at the boundary are {\em algebraic} 
(proper decay of the functions $\bar w$ and $\bar w_\ell$) and 
{\em geometric}
(consistency of ribbons and normals). Similar conditions appear in 
the following
theorem concerning curvature tensors. However, and this may 
be unexpected, also a {\em parametric} condition (consistency of 
derivatives of ribbons) has to be met to guarantee existence of 
a curvature tensor at corners.
\begin{theorem}[$G^2$-boundary]
\label{thm:G2}
Consider an ABC-surface $\ba$ according to Definition~\ref{def:ABC}
with contact order $k=2$ at $\Gamma_j$.
If the conditions of theorems~\ref{thm:G0} and \ref{thm:G1} are 
satisfied, if
the curvature tensors of ribbons are consistent according to 
\[
 E_{j-1}(1,0) = E_j(0,0)
 ,
\]
and if the condition
\begin{equation}
 \label{eq:Dr=Dr}
 D\bbr_{j-1}(0) = D\bbr_j(0)
\end{equation}
is satisfied, then
$\ba$ has a well-defined curvature tensor $E_\ba$
at $\Gamma_j$, and it holds 
\begin{equation}
\label{eq:G2}
 E_\ba(\gamma_j(u)) = E_j(u,0)
 ,\quad 
 u \in [0,1]
 .
\end{equation}
\end{theorem}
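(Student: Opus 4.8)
The plan is to follow the two-case strategy established in the proofs of Theorems~\ref{thm:G0} and \ref{thm:G1}: first treat an interior boundary point $\sigma = \gamma_j(u)$ with $u > 0$, where a single ribbon dominates, and then treat the corner case $u = 0$, where two ribbons blend. For the interior case, contact order $k = 2$ means $\bar w(\tau), \bar w_\ell(\tau) = O(|\tau|^3)$ for $\ell \ne j$, so from \eqref{eq:atau} one gets
\[
 \ba(\sigma+\tau) = \frac{\bbr_j(\tau) + O(|\tau|^3)}{1 + O(|\tau|^3)} = \bbr_j(\tau) + O(|\tau|^3).
\]
Since $\ba(\sigma+\cdot)$ and $\bbr_j$ agree up to third-order terms, their second fundamental forms at the origin coincide. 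Invoking \eqref{eq:n=n}, which says the curvature tensor of $\bbr_j$ equals that of $\br_j$ evaluated at $\kappa_j(\sigma)$, I would conclude $E_\ba(\sigma) = \bar E_j(0) = E_j(u,0)$, giving \eqref{eq:G2} for $u > 0$.

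The corner case $u = 0$ is where the real work lies and where the new hypothesis \eqref{eq:Dr=Dr} must be used. Here I would start from the expansion \eqref{eq:h}, now valid modulo $O(|\tau|^3)$ because $k = 2$:
\[
 \ba(\sigma_j+\tau) = h(\tau)\bbr_j(\tau) + h'(\tau)\bbr_{j-1}(\tau) + O(|\tau|^3).
\]
The curvature tensor depends on the second derivatives of $\ba$, so I would differentiate this blend twice. The product rule produces three types of terms: the convex combination $h\, D^2\bbr_j + h'\, D^2\bbr_{j-1}$ of the ribbons' second derivatives, cross terms such as $Dh \otimes (D\bbr_j - D\bbr_{j-1})$, and a term $(\bbr_j - \bbr_{j-1}) \otimes D^2 h$ carrying the second derivative of the blend function $h$. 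The crucial point is that $h$ is a ratio of weights whose derivatives need not have limits as $\tau \to 0$, exactly as noted at the end of the proof of Theorem~\ref{thm:G1}; consequently $Dh$ and $D^2 h$ may blow up along the boundary.

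The main obstacle, therefore, is controlling these potentially singular $h$-derivative terms, and this is precisely what conditions \eqref{eq:Dr=Dr} and the curvature consistency $E_{j-1}(1,0) = E_j(0,0)$ are designed to neutralize. By consistency of ribbons (from Theorem~\ref{thm:G0}) the factor $\bbr_j - \bbr_{j-1}$ is already $O(|\tau|)$, and the new parametric condition \eqref{eq:Dr=Dr} forces the next order to vanish as well, so that $\bbr_j - \bbr_{j-1} = O(|\tau|^2)$; this is enough to kill the dangerous $D^2 h$ term even if $D^2 h$ grows like $|\tau|^{-2}$. Similarly, \eqref{eq:Dr=Dr} makes the cross term $Dh \otimes (D\bbr_j - D\bbr_{j-1})$ harmless, since $D\bbr_j - D\bbr_{j-1} = O(|\tau|)$ tames any first-order growth of $Dh$. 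What survives is the convex combination $h\, D^2\bbr_j(0) + h'\, D^2\bbr_{j-1}(0)$, and under the shared Jacobian \eqref{eq:Dr=Dr} together with normal consistency, the first and second fundamental forms of both ribbons agree at the corner by the assumed curvature consistency $E_{j-1}(1,0) = E_j(0,0)$. Hence the blend has a well-defined second fundamental form independent of the value $h \in [0,1]$, and passing to the limit $\tau \to 0$ yields $E_\ba(\sigma_j) = E_j(0,0)$. I would be careful to phrase the final step intrinsically—computing the curvature tensor from the first and second fundamental forms rather than from raw second derivatives—so that it is manifest that the limit exists even though $D\ba(\sigma_j+\tau)$ and $D^2\ba(\sigma_j+\tau)$ individually may fail to converge.
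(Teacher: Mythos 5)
Your proposal is correct and follows essentially the same route as the paper: the same interior/corner case split, the same use of \eqref{eq:Dr=Dr} to tame the singular derivatives of $h$ via $\bbr_j-\bbr_{j-1}=O(|\tau|^2)$ and $D\bbr_j-D\bbr_{j-1}=O(|\tau|)$, and the same passage to first and second fundamental forms to conclude. The only small slip is the remark that $D\ba(\sigma_j+\tau)$ may fail to converge: under \eqref{eq:Dr=Dr} the convex combination $h\,D\bbr_j(0)+h'\,D\bbr_{j-1}(0)$ collapses to $D\bbr_j(0)$, so the Jacobian (hence the first fundamental form of $\ba$) does have a limit at the corner --- a fact the paper records explicitly as \eqref{eq:Da=Dr} and which your intrinsic formulation of the final step implicitly relies on.
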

\begin{proof}
As before, we obtain for $u>0$
\[
 \ba(\sigma+\tau) = 
 \frac{\bbr_j(\tau) + O(|\tau|^3)}{1+O(|\tau|^3)}
 = \bbr_j(\tau) + O(|\tau|^3)
 .
\]
Differing only by third order terms, the surfaces
$\ba(\sigma+\cdot)$ and $\bbr$ have the same curvature tensor at the 
origin. Hence, using \eqref{eq:n=n}, we obtain
\[
 E_\ba(\sigma) = E_j(0) = E_j(u,0)
 .
\]
Analogous to the previous proof, we obtain for 
$u=0$, corresponding to the case $\sigma=\sigma_j$,
\[
 \ba(\sigma_j+\tau)
 =
 h \bbr_j + h' \bbr_{j-1} + O(|\tau|^3)
\]
and, using \eqref{eq:Dr=Dr},
\[
 D\ba(\sigma_j+\tau) = 
 h D\bbr_j(0) + h' D\bbr_{j-1}(0)
 + O(|\tau|)
 =
 D\bbr_j(0) + O(|\tau|)
 .
\]
This time, the derivative has a limit at the corner,
\begin{equation}
 \label{eq:Da=Dr}
 D\ba(\sigma_j) = D\bbr_j(0)
 .
\end{equation}
The second order partial derivatives are given by
\begin{align*}
 \partial_{\mu\nu} \ba(\sigma_j+\tau) =
 &\partial_{\mu\nu}h(\bbr_j-\bbr_{j-1}) +
 \partial_{\mu} h\, (\partial_\nu\bbr_j - \partial_\nu\bbr_{j-1} )+\\
 &\partial_{\nu} h\, (\partial_\mu\bbr_j - \partial_\mu\bbr_{j-1} )+
 h \partial_{\mu\nu}\bbr_j +h' \partial_{\mu\nu}\bbr_{j-1} + 
O(|\tau|)\\
= 
 &h \partial_{\mu\nu}\bbr_j(0) +h' \partial_{\mu\nu}\bbr_{j-1}(0) 
 + O(|\tau|),
\end{align*}
where we used that $\bbr_{j-1}(0) = \bbr_j(0)$ and 
$D\bbr_{j-1}(0) = D\bbr_j(0)$.
Further, the curvature tensors of $\bbr_{j-1}$ and
$\bbr_j$ coincide at the origin, $\bar E_{j-1}(0) = \bar E_j(0)$.
This implies that also the second fundamental forms coincide,
\[
 \bar B_{j-1}^{\mu\nu}(0) := 
 \langle \bbn_{j-1}(0), \partial_{\mu\nu}\bbr_{j-1}(0) \rangle
 =
 \langle \bbn_{j}(0), \partial_{\mu\nu}\bbr_{j}(0) \rangle
 =: 
 \bar B_{j}^{\mu\nu}(0)
 .
\]
The second fundamental form of $\ba$ is
\[
 B_\ba^{\mu\nu}(\sigma_j+\tau) := 
 \langle \bn_\ba(\sigma_j+\tau), 
 \partial_{\mu\nu}\ba(\sigma_j+\tau) \rangle 
 .
\]
By Theorem~\ref{thm:G1},
the normal vector of $\ba$ converges, i.e.,
$\bn_\ba(\sigma_j+\tau) = \bn_\ba(\sigma_j) + O(|\tau|)$.
Hence, we obtain
\begin{align*}
 B_\ba^{\mu\nu}(\sigma_j+\tau) &=
 h \langle 
\bn_\ba(\sigma_j),\partial_{\mu\nu}\bbr_{j}(0) \rangle+
h' \langle
\bn_\ba(\sigma_j),\partial_{\mu\nu}\bbr_{j-1}(0) \rangle
+ O(|\tau|)\\
&=
h B_j^{\mu\nu}(0) + h' B_{j-1}^{\mu\nu}(0) + O(|\tau|)
= B_j^{\mu\nu}(0) + O(|\tau|)
\end{align*}
and finally, using $\bn_\ba(\sigma_j) = \bbn_{j}(0)
= \bbn_{j-1}(0)$,
\[
 B_\ba^{\mu\nu}(\sigma_j) = B_j^{\mu\nu}(0)
 .
\]
Together with \eqref{eq:Da=Dr} and \eqref{eq:n=n}, this implies 
coincidence of curvature tensors, 
\[
 E_\ba(\sigma_j) = \bar E_j(0) = E_j(0,0)
 ,
\]
and the proof is complete.
\end{proof}
The proofs of the three theorems show that $G^k$-properties
on the interior part of the boundary segments $(0 < u < 1)$
are easily achieved by requesting proper decay of the functions 
$\bar w$ and $\bar w_\ell$. Consistency and regularity conditions
are requested only to guarantee good behavior at the corners.
Remarkably, condition \eqref{eq:Dr=Dr} concerning 
consistency of the derivatives of reparametrized ribbons shows that
the functions $\kappa_\ell$ cannot be chosen independently of the 
ribbons $\br_\ell$. More precisely, the interrelation
\begin{equation}
\label{eq:Dbbr}
 D\bbr_{j-1}(0) = D\br_{j-1}(1,0) D\kappa_{j-1}(\sigma_j) = 
 D\br_{j}(0,0) D\kappa_{j}(\sigma_j) = D\bbr_{j}(0)
\end{equation}
must be observed.
Assuming that the ribbons are given and the reparametrizations 
are still unknown, this equation yields a linear constraint
on the pair $D\kappa_{j-1},D\kappa_j$ at the corner $\sigma_j$.
Vice versa, if the reparametrizations are given, the space of 
admissible ribbons is restricted to a linear subspace.

The following example shows that this condition is in fact crucial.
Let 
\[
 \bbr_{j-1}(x,y) = 
 \begin{bmatrix}
 x \\ y \\ x^2                
 \end{bmatrix}
 ,\quad 
 \bbr_{j}(x,y) = 
 \begin{bmatrix}
 2x \\ 2y \\ 4x^2                
 \end{bmatrix} 
 ,\quad 
 (x,y) \in [0,1]^2
 .
\]
Near the origin, the traces of both surfaces coincide so that 
consistency of point, normal, and curvature tensor is guaranteed.
However, 
\[
 D\bbr_{j-1}(0,0) =
 \begin{bmatrix}
 1 & 0\\  0 & 1 \\ 0 & 0
 \end{bmatrix}
 \neq
 \begin{bmatrix}
 2 & 0\\  0 & 2 \\ 0 & 0
 \end{bmatrix}
 = D\bbr_j(0,0)
 .
\]
The weight functions
$w_{j-1}(x,y) = x^3, w_j(x,y) = y^3$ have proper decay at the 
boundary curves $\gamma_{j-1}(u) = [u,0,0]$ and 
$\gamma_j(u) = [0,u,0]$. Assuming for simplicity that 
the base and all other ribbons vanish, we obtain
\[
 \ba(x,y) = \frac{1}{x^3+y^3}\,
 \begin{bmatrix}
 x(x^3+2y^3) \\y(x^3+2y^3)\\x^2(x^3+4y^3)
 \end{bmatrix}
 .
\]
Skipping the technical details, we note that 
\[
 \lim_{(x,y)\to (0,0)} \ba(x,y) = 
 \bbr_{j-1}(0,0) =\bbr_j(0,0) =
 \begin{bmatrix}
 0\\0\\0 
 \end{bmatrix}
\]
and
\[
 \lim_{(x,y)\to (0,0)} \bn_\ba(x,y) = 
 \bbn_{j-1}(0,0) =\bbn_j(0,0) =
 \begin{bmatrix}
 0\\0\\1 
 \end{bmatrix}
 .
\]
However, the curvature tensor $E_\ba$ is discontinuous at the origin.
Figure~\ref{counter_example} shows the surface $\ba$ 
\change{{\it (left)}} and the 
Gaussian curvature 
$K(r\cos \varphi,r\sin \varphi)$ \change{{\it (right)}
as a function of $r$.
The individual lines correspond to $30$ equally spaced angles 
$\varphi$
in the interval $[0,\pi/2]$}. For $\varphi=0$
and $\varphi=\pi/2$, corresponding to the trimming curves 
$\gamma_{j-1}$ and $\gamma_j$, \change{it is 
$\lim_{r\to 0} K(r,0) = \lim_{r=0}K(0,r)=0$},
as suggested by 
the cylindrical shape of $\bbr_{j-1}$ and $\bbr_j$. However,
other angles yield non-zero limits.
\begin{figure}[t]
\begin{center}
\includegraphics[width=.49\textwidth]{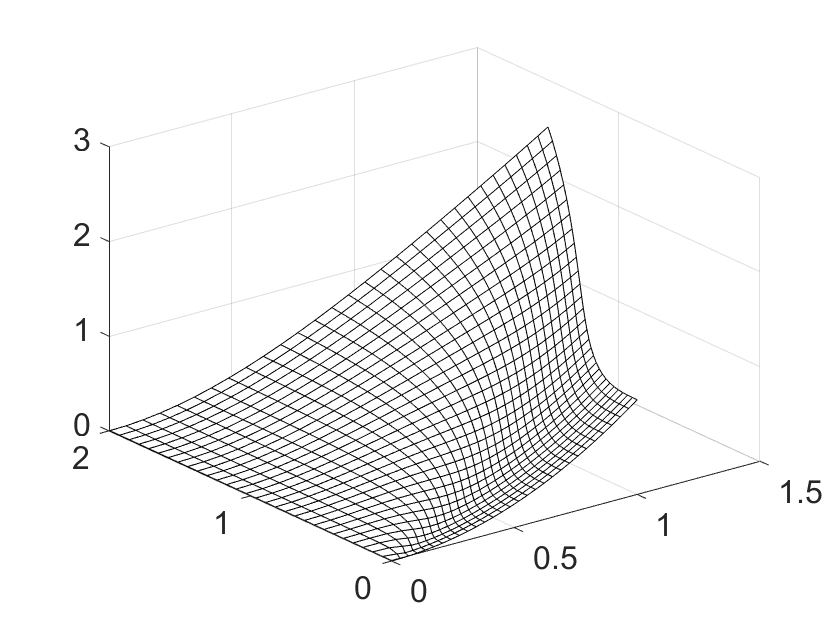}
\includegraphics[width=.49\textwidth]{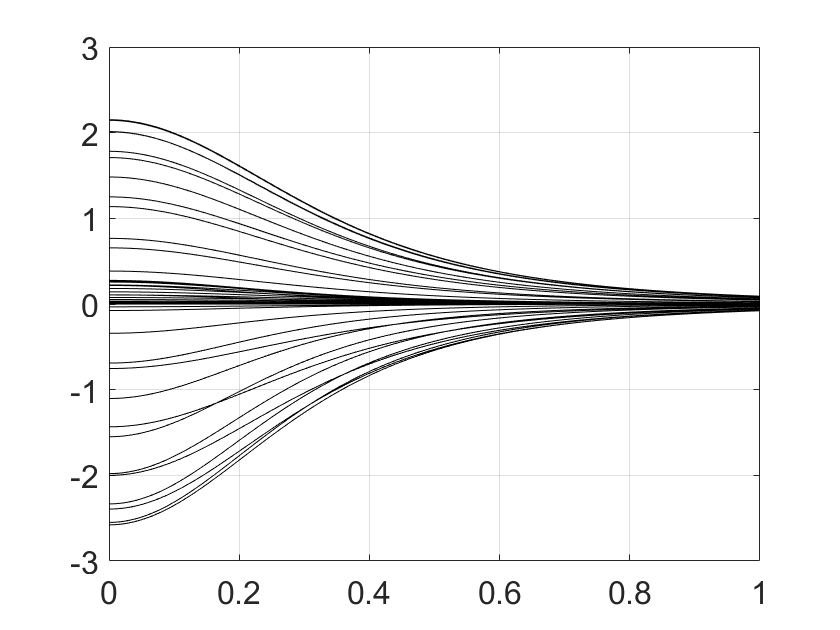}
\end{center}
\caption{\change{Surface $\ba(x,y)$ obtained by 
blending} inappropriately repara\-metrized ribbons {\it 
(left)} and \change{its} Gaussian curvature
$K(r\cos \varphi, r\sin \varphi)$
\change{plottet over $r \in (0,1]$}
for \change{30} different angles $\change{\varphi \in [0,\pi/2]}$
{\it (right)}.
}
\label{counter_example}
\end{figure}
%


\section{ABC-spline surfaces}

In the previous section, the ABC-principle was introduced in great 
generality. Our main goal, however, is to find a special embodiment 
of the setup that is fully compatible with the predominant standard 
of industrial surface modeling: trimmed NURBS.

To this end, we assume that all building blocks of the ABC-surface 
$\ba$ are tensor product NURBS. Even more, for the sake 
of simplicity, we consider only purely polynomial, i.e., non-rational 
splines.
Further, it is \change{practical} (but not necessary) to use 
a common set 
of knot vectors for the base $\bb$, the weights $w,w_\ell$
and the reparametrizations $\kappa_\ell$; degrees may be different.
By contrast, the knots of the ribbons $\br_\ell$ are completely 
independent. It is immediately clear that the concatenation, 
multiplication, summation and division of piecewise polynomial
functions yields a piecewise rational function $\ba$.
Denoting the coordinate degree of the bivariate spline $s$ by 
$\deg s\in \N_0^2$ and its total degree by $|\deg s| \in \N_0$, 
we obtain 
\begin{align*}
 d &:= \deg (w \bb) = \deg w + \deg \bb\\
 d_\ell &:= \deg (w_\ell \change{\br_\ell} \circ \kappa_\ell)
 = \deg w_\ell + |\deg \br_\ell| \deg \kappa_\ell
 .
\end{align*}
Together, the maximal degree of the rational pieces of $\ba$ equals 
that of the corresponding numerators, and is given by
\begin{equation}
\label{eq:deg}
 \deg \ba = \max\{d,d_1,\dots,d_L\}
 ,
\end{equation}
where the maximum is taken component-wise.
More specific results will be presented later on.

The crucial question that has to be addressed now concerns the 
construction of suitable splines $w,w_\ell$, and $\kappa_\ell$. 
In recent works \cite{varady_2020, varady_2020a}, reparametrizations 
for domains bounded by
curves are obtained as solutions to variational problems with the 
given trimming curves $\gamma_\ell$ as boundary conditions.
It can be expected that this yields good visual results, but 
of course, there does not exist a closed-form solution, and much less 
a spline fulfilling the condition $\kappa_\ell(\gamma(u)) = [u,0]$.
Similar problems arise when seeking weights vanishing at the boundary
or parts thereof.

The solution of the problem to find weights and reparametrizations 
compatible with the boundary is amazingly simple: There is no need to 
define the boundary curves $\gamma_\ell$ explicitly. It suffices
to define them {\em implicitly} by the functions $\kappa_\ell$ 
according to Definition~\ref{def:kappa},
\[
    \gamma_\ell(u) := \{\xi \in \R^2 : \kappa_\ell(\xi) = [u, 0]\}
    .
\]
This setting is justified by the fact that standard CAD systems and 
also standard file exchange formats like STEP or IGES allow 
trimming curves to be specified explicitly as a closed loop of spline 
curves in the parametric domain, or alternatively--and this is the 
option used here--as a closed loop of spline curves in $\R^3$
bounding the trace of the given NURBS surface. By 
Theorem~\ref{thm:G0}, the curves
\[
 [0,1] \ni u \ \mapsto\ 
 \bc_\ell(u) := \br_\ell(u,0) = \ba(\gamma_\ell(u))
 ,\quad 
 \ell = 1,\dots,L
 ,
\]
can serve exactly that purpose. Actually, the planar trimming curves 
$\gamma_\ell$ need not be computed at any processing step of 
ABC-spline surfaces as long as geometric trimming by spatial curves,
as explaind in Section~4, is available.

Now, we elaborate on methods how to determine suitable 
reparametrizations $\kappa_\ell$ and weights $w,w_\ell$
under the assumption that the base $\bb$ and the ribbons $\br_\ell$ 
are given. 

\subsection{Reparametrization}

In principle, the functions $\kappa_\ell$ can be chosen quite
arbitrarily. Basically, only the consistency condition
\eqref{eq:gamma_loop} must be satisfied. This means that there 
exist points $\sigma_\ell\in \R^2$ such that 
\[
 \kappa_\ell(\sigma_\ell) = [0,0]
 ,\quad 
 \kappa_{\ell-1}(\sigma_\ell) = [1,0]
\]
for all $\ell$. If continuity of curvature tensors is 
requested, also condition \eqref{eq:Dbbr} must be observed, i.e.,
\begin{equation}
\label{eq:con1}
  D\br_{\ell-1}(1,0)\cdot
  D\kappa_{\ell-1}(\sigma_\ell) = 
  D\br_{\ell}(0,0)\cdot D\kappa_\ell(\sigma_\ell)
  .
\end{equation}
\change{Further}, to obtain fair surfaces, it is reasonable to 
request that 
$\kappa_\ell$ repara\-metrizes $\br_\ell$ in such a way that 
it is close to the base,
\[
 \br_\ell \circ \kappa_\ell \approx \bb
 .
\]
The simplest way to achieve this (and possibly other desirable 
properties) is as follows:
First, for each $\ell$, \change{we consider pairs of}
parameters $\sigma_\ell,\tau_\ell \in \R^2$,
characterized by 
\[
 \br_\ell(\tau_\ell) \approx \bb(\sigma_\ell)
 .
\]
\change{Collecting a sufficiently large, but finite number of 
them yields the set $K_\ell$.}
Such pairs $(\sigma_\ell,\tau_\ell) \in K_\ell$ can be found easily
by requesting that the line connecting the points
$\br_\ell(\tau_\ell)$ and $\bb(\sigma_\ell)$ be perpendicular
to one of the two surfaces.
To address the special role of corners, the set $K_\ell$ 
must 
contain the $\tau$ values $\tau_\ell^0 := [0,0]$ and
$\tau_\ell^1 := [1,0]$. The corresponding $\sigma$-values must 
be \change{consistent} in the sense that the corners
\[
   \sigma_\ell : = \sigma_\ell^0 = \sigma_{\ell-1}^1
\]
are well-defined.
At this and possibly other data points, the function $\kappa_\ell$
should not only approximate but interpolate the prescribed values. 
More precisely, let 
$K'_\ell \subset K_\ell$ be the subsets of pairs intended
for interpolation and $K''_\ell := K_\ell \setminus K'_\ell$ the 
\change{remainders}. In particular, the pairs 
$(\sigma_\ell^0,\tau_\ell^0)$
and $(\sigma_\ell^1,\tau_\ell^1)$ addressing the corners belong
to $K_\ell'$. Now, the conditions to be satisfied by $\kappa_\ell$ are
\begin{align*}
 \kappa_\ell(\sigma_\ell) &= \tau_\ell
 ,\quad 
 (\sigma_\ell,\tau_\ell) \in K_\ell' \\
 \kappa_\ell(\sigma_\ell) &\approx \tau_\ell
 ,\quad 
 (\sigma_\ell,\tau_\ell) \in K_\ell''
 .
\end{align*}
Standard techniques, including all kinds of least squares methods
or smoothing splines can be used to select $\kappa_\ell$ accordingly 
from a given space of splines. If the problem is overconstrained in 
the sense that no solution exists, the spline space must be enlarged
or the conditions must be relaxed. Conversely, if the solution is not 
unique, the spline space must be reduced or more conditions must be 
specified.

The interpolation conditions
$\kappa_\ell(\sigma_\ell) = [0,0]$ and
$\kappa_{\ell-1}(\sigma_\ell) = [1,0]$
for the corners yield
\[
 \gamma_\ell(0) = \gamma_{l-1}(1) = \sigma_\ell
 ,
\]
as requested by \eqref{eq:gamma_loop}. Still, condition 
\eqref{eq:con1} needs to be addressed. When needed, one determines
matrices $T_\ell \in \R^{3\times 2}$ such that 
\[
 T_\ell \approx D\bb(\sigma_\ell)
 ,\quad 
 \bn_\ell(0,0)^{\rm t} \cdot T_\ell = 0
 ,\quad 
 \ell = 1,\dots,L
 .
\]
In particular, the columns of $T_\ell$ span the wanted tangent space
at $\sigma_\ell$. Now, $\kappa_\ell$ is determined as before, but 
such that is satisfies the additional conditions
\begin{align*}
 D\br_\ell(0,0) \cdot D\kappa_\ell(\sigma_\ell) &= T_\ell\\
 D\br_\ell(1,0) \cdot D\kappa_\ell(\sigma_{\ell+1}) &= T_{\ell+1}
 .
\end{align*}
As the chain rule shows, this is in accordance with 
$\br_\ell \circ \kappa_\ell \approx \bb$, and it also
implies \eqref{eq:con1}.

\subsection{Weights}

Once the reparametrizations $\kappa_\ell = [p_\ell,q_\ell]$ are
known, also suitable weights $w$ and $w_\ell$ can be determined
conveniently. 

By definition, the second coordinate $q_\ell$ of $\kappa_\ell$ 
vanishes at $\gamma_\ell$. Assuming that the 
ABC-surface $\ba$ to be determined should have contact order 
$k_\ell \ge 0$ at the boundary segment $\Gamma_\ell$, the
simplest choice for $w$ is
\begin{equation}
\label{eq:w}
 w = q_1^{r_1} \cdots q_L^{r_L}
 ,
\end{equation}
where the exponents are  $r_\ell := k_\ell+1$.
At this point, it has to be made sure that each factor $q_\ell$
is positive on $\Gamma\setminus\Gamma_\ell$.
Additional zeros away from $\Gamma_\ell$ are not critical. Increasing 
some control points of B-splines vanishing at that boundary segment, 
these zeros can be removed without changing the crucial boundary 
behavior. However, zeros near the boundary may be persistent
so that the whole construction fails. In particular, reentrant 
corners are likely to cause such a situation. This case requires 
a special treatment and will be discussed in a forthcoming report.

Omitting one factor yields possible weights
\[
 w_\ell = q_1^{r_1} \cdots 
 q_{\ell-1}^{r_{\ell-1}}
 q_{\ell+1}^{r_{\ell+1}} \cdots q_L^{r_L}
 .
\]
Let us check the conditions concerning contact order given in
Definition~\ref{def:contact}.
For $\sigma \in \Gamma_j$, we obtain
\[
 \bar w (\tau)= \frac{w(\sigma+\tau)}{w_j(\sigma+\tau)} 
 = q_j^{r_j}(\sigma+\tau) = O(|\tau|^{r_j})
\]
since $q_j(\sigma) = 0$. Further, 
\[
 \bar w_\ell (\tau)= \frac{w_\ell(\sigma+\tau)}{w_j(\sigma+\tau)} 
 = \frac{q_j^{r_j}(\sigma+\tau)}{q_\ell^{r_\ell}(\sigma+\tau)} = 
 O(|\tau|^{r_j})
\]
unless the denominator vanishes at $\sigma$, and this is only
possible if $\ell = j$ or $\ell=j-1$ and $\sigma = \sigma_j$.
Exactly these cases are excluded in the definition.

The weights suggested above have two drawbacks: First, the influence 
of each ribbon extends over the whole surface, which might be 
unwanted.
Second, the polynomial degree of such weights is increasing with $L$.
If, for example, $L=6$, $\deg q_\ell = [3,3]$, and $r_\ell=3$
for all $\ell$ to achieve contact order $k=2$ everywhere, then 
$\deg w = \sum_{\ell=1}^L r_\ell\deg q_\ell= [54,54]$.
This might cause problems with efficiency of storage and speed
of evaluation after converting the surface $\ba$
from its procedural definition \eqref{eq:ABC} to the usual form
\[
 \ba = \frac{\sum_i \omega_i b_i \bp_i}{\sum_i \omega_i b_i}
\]
of a NURBS surface, i.e., the linear combination of control points
$\bp_i$ by means of B-splines $b_i$ and (constant) weights 
$\omega_i$. The latter is requested in particular for the export of 
$\ba$ to standard file formats, as addressed in the next section.
The following modification of weights away from the boundary of 
the domain $\Gamma$ solves both issues:

\begin{figure}[t]
\begin{center}
\includegraphics[width=.45\textwidth]{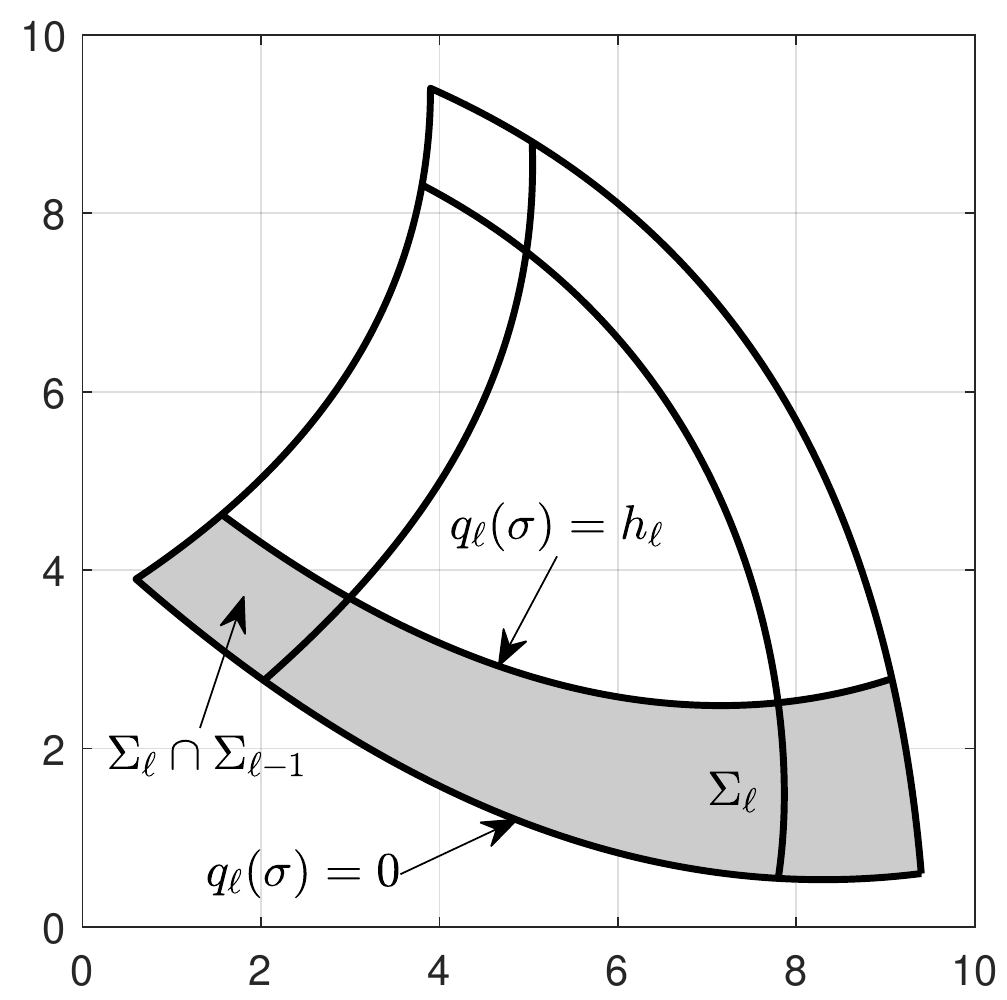}
\qquad 
\includegraphics[width=.45\textwidth]{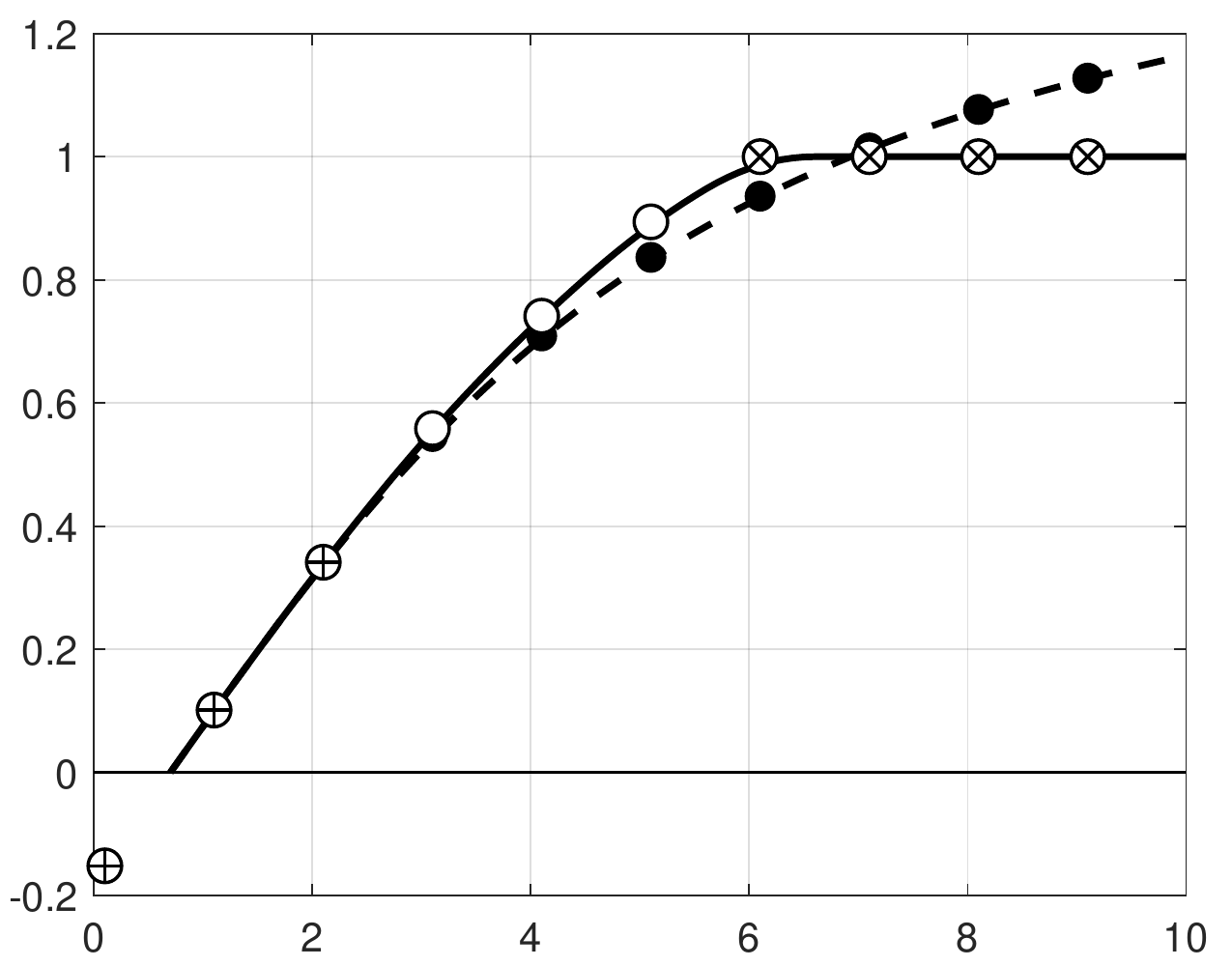}
\end{center}
\caption{Boundary stripes {\it (left)} and univariate plateau 
generation with 
original control points~$\bullet$,
unchanged control points~$\oplus$,
constant control points~$\otimes$,
and optimized control points~$\circ$
{\it (right)}.
}
\label{Stripes+Plateau}
\end{figure}
Denote by
\[
 \Sigma_\ell := \{\sigma \in \Gamma: q_\ell(\sigma) \le h_\ell\}
\]
the {\em boundary stripes} of $\Gamma$ with 
\change{{\em width $h_\ell$}}.
The widths $h_\ell>0$ are assumed to be chosen small enough to make 
sure that only boundary stripes of neighboring curves intersect, and 
that these intersections as well as the stripes themselves consist of 
a single connected piece, see Figure~\ref{Stripes+Plateau} 
{\it (left)}. 
Now, 
we derive from $q_\ell$ a new spline $\bar q_\ell$ as follows:
First, classify the indices in the B-spline representation $q_\ell 
= \sum_i q_\ell^i b_i$ according to 
\begin{align*}
 I_\ell &:= \{i : q_\ell(\sigma)=0 \text{ for some } \sigma \in \supp 
b_i\}
\\
 J_\ell &:= \{i : q_\ell(\sigma)>h_\ell \text{ for some } \sigma \in 
\supp 
b_i\}
 .
\end{align*}
That is, B-splines $b_i$ affect the boundary if $i \in I_\ell$, and 
they affect the part beyond the boundary stripe if $i \in J_\ell$.
Second, if the sets $I_\ell$ and $J_\ell$ are not disjoint, refine 
the B-spline representation of $q_\ell$ by knot insertion and repeat 
the first step.
Third, set 
\begin{equation}
\label{eq:cut}
 \bar q_\ell^i := 
 \begin{cases}
  q_\ell^i & \text{ if } i \in I_\ell\\
  h_\ell & \text{ if } i \in J_\ell.
 \end{cases}
\end{equation}
Coefficients with indices neither in $I_\ell$ nor in $J_\ell$ are 
determined such that the resulting spline has reasonable shape, for 
instance by minimizing a standard fairness functional.
Fourth, define the spline $\bar q_\ell := \sum_{i} \bar q_\ell^i b_i$.
By construction, $\bar q_\ell$ is an implicit representation of 
$\gamma_\ell$, but has a {\em plateau} with constant value $h_\ell$ 
beyond the boundary stripe $\Sigma_\ell$.
Figure~\ref{Stripes+Plateau} {\it (right)} illustrates the procedure
exemplarily in the univariate case.
Replacing \eqref{eq:w} by 
\[
 w = \bar q_1^{r_1} \cdots \bar q_L^{r_L}
 , 
\]
we obtain an implicit representation of $\partial \Gamma$ with 
plateau on the interior part of the domain that 
is not covered by a boundary stripe, see 
Figure~\ref{WeightsPlateau} {\it (left)}.
By construction, the coordinate degree of $w$ is bounded by 
\[
 \deg w \le \max_\ell 
 \bigl(r_\ell \deg q_\ell + r_{\ell+1} \deg q_{\ell+1}\bigr)
\]
since at any given point $\sigma \in \Gamma$ at most two factors are 
non-constant. Notably, the degree does not depend on the number $L$ 
of boundary segments.
In the example above, we have now $\deg q = [18,18]$.
This is still a relatively high degree, but much less than before, 
and can be considered manageable.

\begin{figure}[t]
\begin{center}
\includegraphics[width=.45\textwidth]{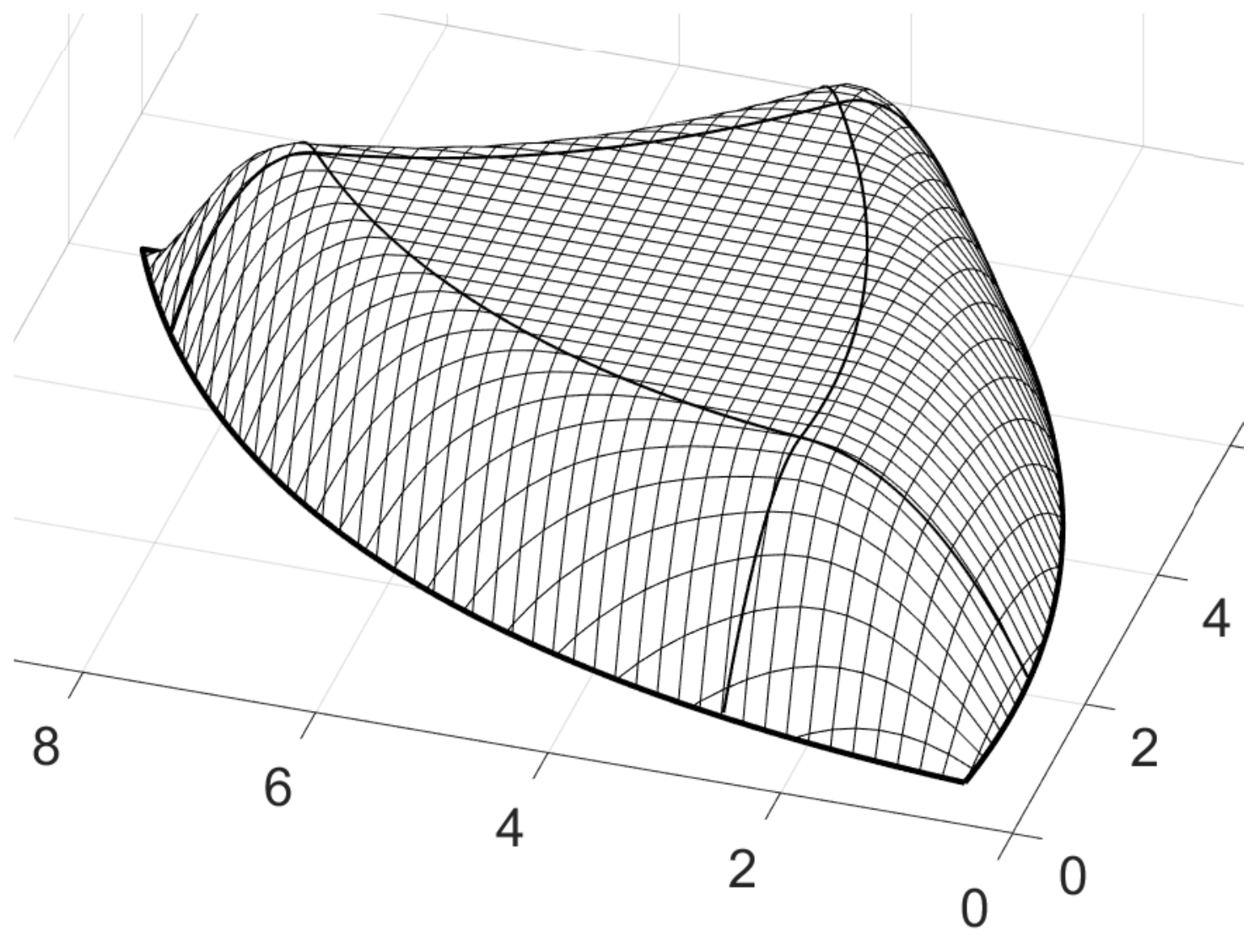}
\qquad 
\includegraphics[width=.45\textwidth]{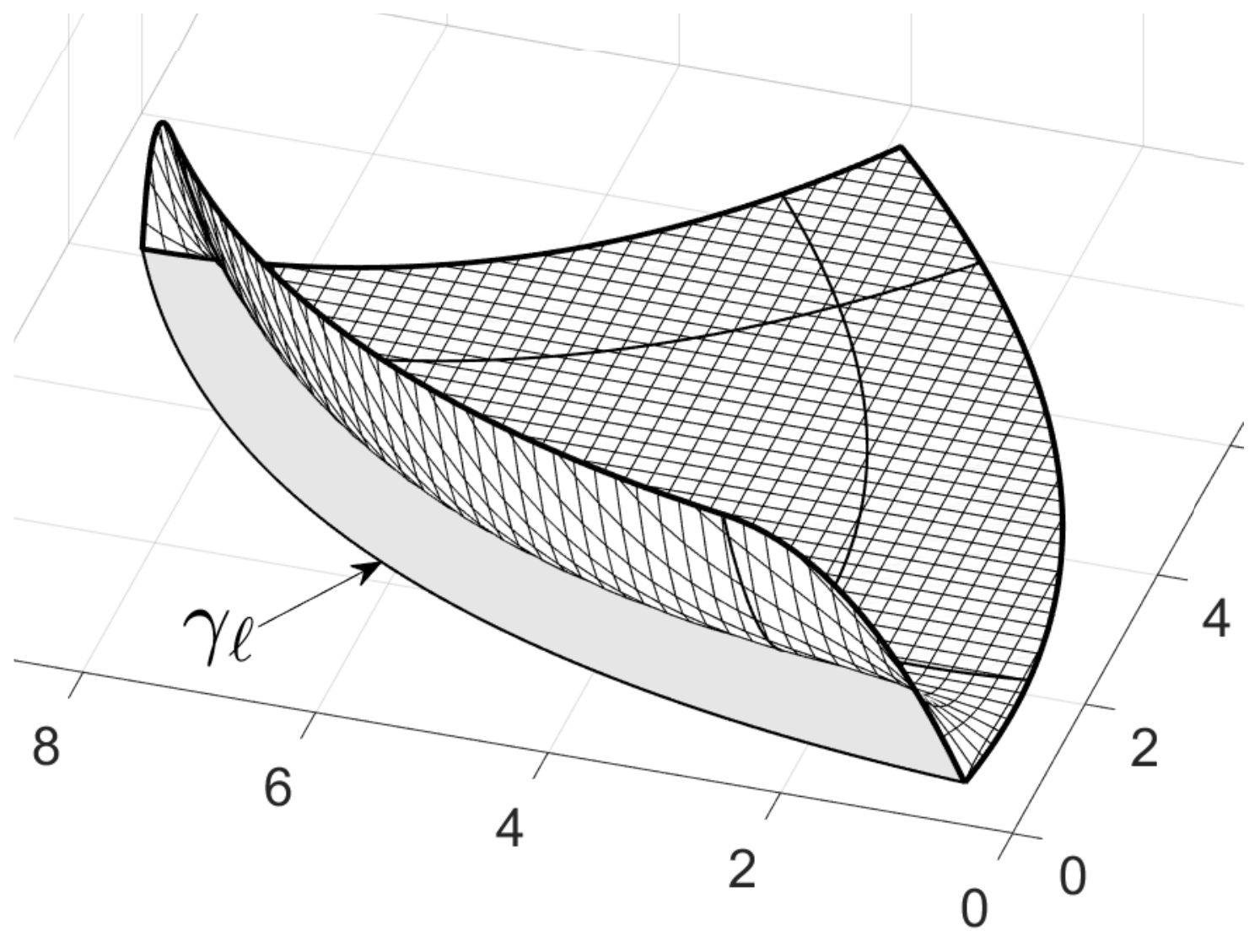}
\end{center}
\caption{Weights $w$ {\it (left)} and $w_\ell$ {\it (right)}
with plateau.}
\label{WeightsPlateau}
\end{figure}
The weights $w_\ell$ may be modified in a similar spirit. 
We define the functions 
\[
 \hat q_\ell := \bar q_{\ell-1} \bar q_{\ell+1}
\]
and process them exactly like the functions $q_\ell$ above,
but now replacing \eqref{eq:cut} by
\[
  \tilde q_\ell^i := 
  \begin{cases}
  \hat q_\ell^i & \text{ if } i \in I_\ell\\
  0& \text{ if } i \in J_\ell.
 \end{cases}
\]
This way, we produce a function with a zero plateau 
beyond the boundary stripe. The desired weight is now simply given by
\[
 w_\ell := \tilde q_\ell^{r_\ell}
 ,
\]
see Figure~\ref{WeightsPlateau} {\it (right)}. The degree
\[
 \deg w_\ell = r_\ell \max(\deg q_{\ell-1}, \deg q_{\ell+1})
\]
is again independent of $L$.

Let us reconsider the degree of $\ba$
for weights $w$ and $w_\ell$ with plateau, as defined above.
For simplicity, we assume that 
$\deg \bb = \deg \kappa_\ell = [n,n]$ and $\deg \br_\ell = [n,m]$,
and that all contact orders are equal to $k = r-1$.
Then, using \eqref{eq:deg},
\[
 \deg \ba = [n,n] \cdot \max\{2r+1,r+n+m\}
 .
\]
The coordinate degree of the ribbons is set to the unmatched pair 
$[n,m]$ for the following reason: In the first direction,
corresponding to the behavior along the boundary, full flexibility
concerning the choice of knots and degrees
is provided to model a large variety of shapes. 
By contrast, in the second 
direction, corresponding to the behavior transversal to the boundary,
it is sufficient to deal with pure polynomials of low degree.
For instance, choosing $m=1$ allows \change{the definition of} the 
boundary curves
$\br_\ell(u,0)$ and the boundary normals $\bn_\ell(u,0)$ along
$\Gamma_\ell$. When also the curvature tensors $E_\ell(u,0)$
shall be prescribed, it suffices to choose $m= 2$.
The minimal configuration yielding a $G^1$-patch and contact 
order $k=1$ requests $n=2$ and $m=1$, leading to $\deg \ba = [10,10]$.
For a $G^2$-patch and contact order $k=2$, one chooses at least 
$n=3$ and $m=2$ to obtain $\deg \ba = [24,24]$.

Modeling with surfaces of such a high degree is commonly considered
to be impractical. However, this is never requested. The geometric 
building blocks--base and ribbons--can be assumed to be splines 
of low degree, and their control points can be manipulated
(manually or automatically) as usual. The conversion of 
the ABC-surface $\ba$ to NURBS form
\[
 \ba = \frac{\sum_i \omega_i b_i \bp_i}{\change{\sum_i}w_i b_i}
 ,
\]
as detailed in the next section,
serves only the purpose to prepare it for file export according 
to industrial standards. It should be considered as a 
{\em read-only representation}, to be used for 
evaluation purposes. Never ever, the control points $\bp_i$ 
should be touched again after being derived from \eqref{eq:ABC}.


\section{File export}

As a matter of fact, data exchange between industrial CAD/CAM systems
is limited to a few completely standardized file formats, like
IGES or STEP. The only type of surfaces commonly available therein are
trimmed NURBS. Other modeling paradigms, say based on box splines, 
exponential splines, or subdivision are typically not supported.
For the trimming of a NURBS surface $\bs : \R^2 \to \R^3$, the 
following three options can be assumed to be available:

{\bf Parametric trimming.} A closed loop of planar spline curves 
$\gamma_1,\dots,\gamma_L$ without self-intersections is specified.
The compact set $\Gamma \subset \R^2$ bounded by these curves is 
used as the domain of the trimmed surface $\bs$.

{\bf Geometric trimming.} A closed loop of spatial spline curves 
$\bc_1,\dots,\bc_L$ is specified. Since it cannot be assumed that 
the traces of these curves are contained in the surface, the 
corresponding planar curves are defined by 
$\gamma_\ell(u) := \alpha$ where $\alpha \in \R^2$ is chosen such 
that $\change{\bs}(\alpha) \approx \bc_\ell(u)$ with the help of some
projection method.
These curves $\gamma_\ell$ need not be splines by themselves,
but are computed numerically. They are merely assumed to form a 
closed loop, which is bounding the domain $\Gamma$.

{\bf Hybrid trimming.} Some segments of the boundary are
defined by planar spline curves, and the other ones by 
spatial spline curves. The planar curves together with 
the preimages of the projected spatial curves are assumed 
to form a closed loop, which is bounding the domain $\Gamma$.

As we have seen, ABC-spline surfaces are piecewise rational.
But nevertheless, in general, they cannot be represented as trimmed 
NURBS in the requested way. The point is that the summands 
$\br_\ell\circ\kappa_\ell$
may produce inner boundaries between rational segments that 
are not axis-aligned \change{in the parameter domain}
and hence cannot be understood as
knot lines. To solve this issue, we present a special 
construction of the reparametrizations $\kappa_\ell$
that leads to a situation where $\ba$ can be represented as
the union of separate parts, each of which is a genuine trimmed
NURBS surface with boundary segments given either 
geometrically by a 3d spline curve or parametrically
by a 2d spline curve.

For a spline $s$ of degree $n$ with \change{knot vector} 
$U = [u^0,\dots,u^N]$, 
the natural domain of definition is the interval 
$[u^n, u^{N-n}]$. We assume that $s$ can be evaluated on all of $\R$
by simply prolonging the leftmost polynomial piece from 
$[u^n,u^{n+1}]$ to $[-\infty,u^{n+1}]$ and the rightmost
piece from $[u^{N-n-1},u^{N-n}]$ to $[u^{N-n-1},\infty)$.
Hence, true break points, where higher order derivatives
are discontinuous, appear only at the {\em inner knots}
$u^{n+1},\dots,u^{N-n-1}$.

Denote the knot vectors of $\br_\ell$ by $U_\ell,V_\ell$.
We start with considering the second vector $V_\ell$.
Critical points $\sigma \subset \R^2$ that are mapped to
the 
\change{knot line $\R \times \{v_\ell^j\}$ for some 
inner knot $v_\ell^j \in V_\ell$} by the 
reparametrization $\kappa_\ell = [p_\ell,q_\ell]$ 
are characterized by
\[
 q_\ell(\sigma) = v_\ell^j
 .
\]
In general, neither the levelsets of $q_\ell$ nor
their images under $\br_\ell$ are splines.  
Special constructions for $q_\ell$ to create such a situation 
are possible, but impractical. It is much easier and 
completely sufficient to request that $V_\ell$ does not possess 
any inner knots at all. For instance, we may assume
$V_\ell = [0,\dots,0,1,\dots,1]$, what is just the knot vector
for polynomials in B\'ezier form. We identified this setting already 
at the end of the preceding section as a legitimate choice.
To sum up,
ribbons that are pure polynomials in the second coordinate
are able to carry information on location, unit normal,
and curvature tensor along the boundary curve, and they 
do not have breaks in the second coordinate.

The treatment of the first variable is more involved since 
assuming also this dependence to be purely polynomial
would unduly restrict the variety of shapes that can be modeled.
Critical points $\sigma \subset \R^2$ that are mapped to
the inner knot $u_\ell^i \in U_\ell$ by 
$\kappa_\ell$ are characterized by
\[
 p_\ell(\sigma) = u_\ell^i
 .
\]
This time, it is appropriate to demand that the critical
levelsets of $p_\ell$ corresponding to inner knots of $U_\ell$
are linear. More precisely, let 
\[
  \mu_\ell^i := \{\sigma\in \Gamma : p_\ell(\sigma) = u_\ell^i 
  \text{ and } 
  w_\ell(\sigma)>0\}
\]
denote the relevant parts of the levelsets of $p_\ell$ corresponding 
to the knots $u_\ell^i$.
In general, these curves are bent,
see Figure~\ref{Levelsets}  {\it (left)}, but
$p_\ell$ can be determined such that
they become straight lines,
see Figure~\ref{Levelsets}  {\it (right)}.
One may proceed as follows:
\begin{figure}[t]
\begin{center}
\includegraphics[width=.45\textwidth]{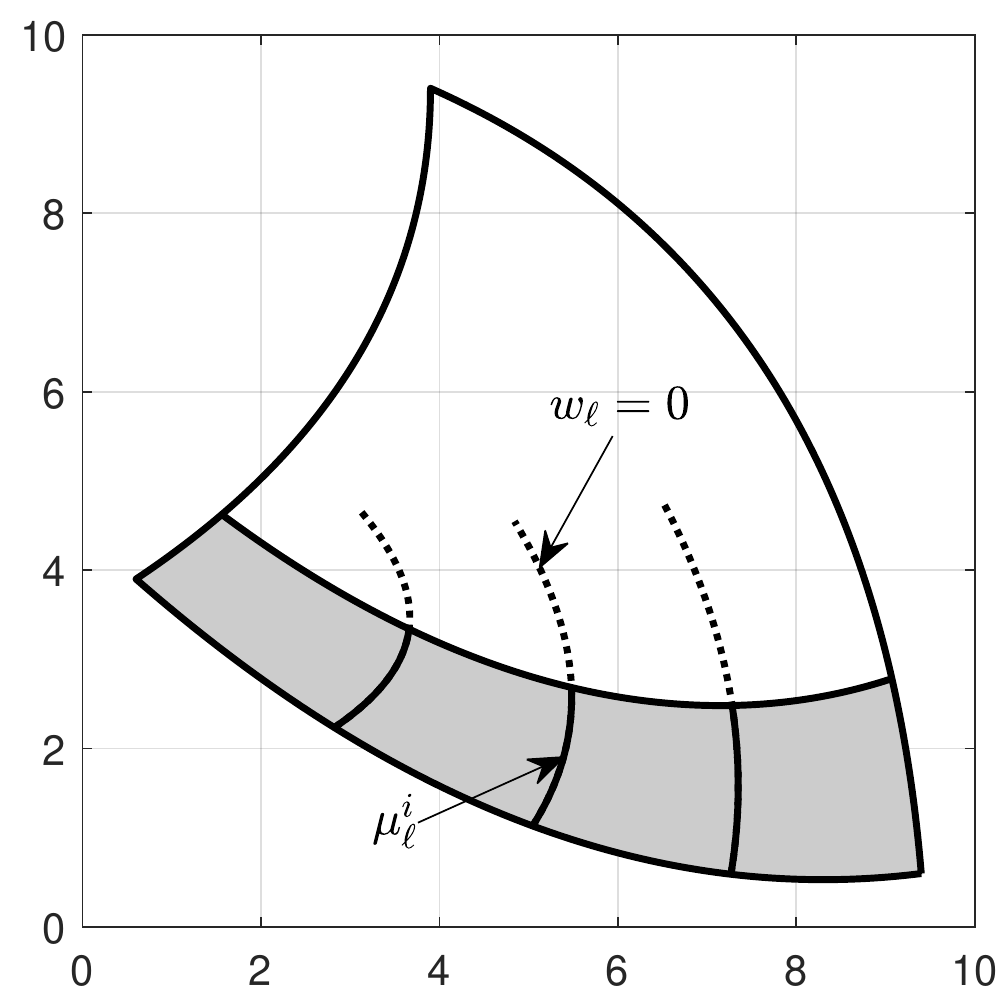}
\qquad 
\includegraphics[width=.45\textwidth]{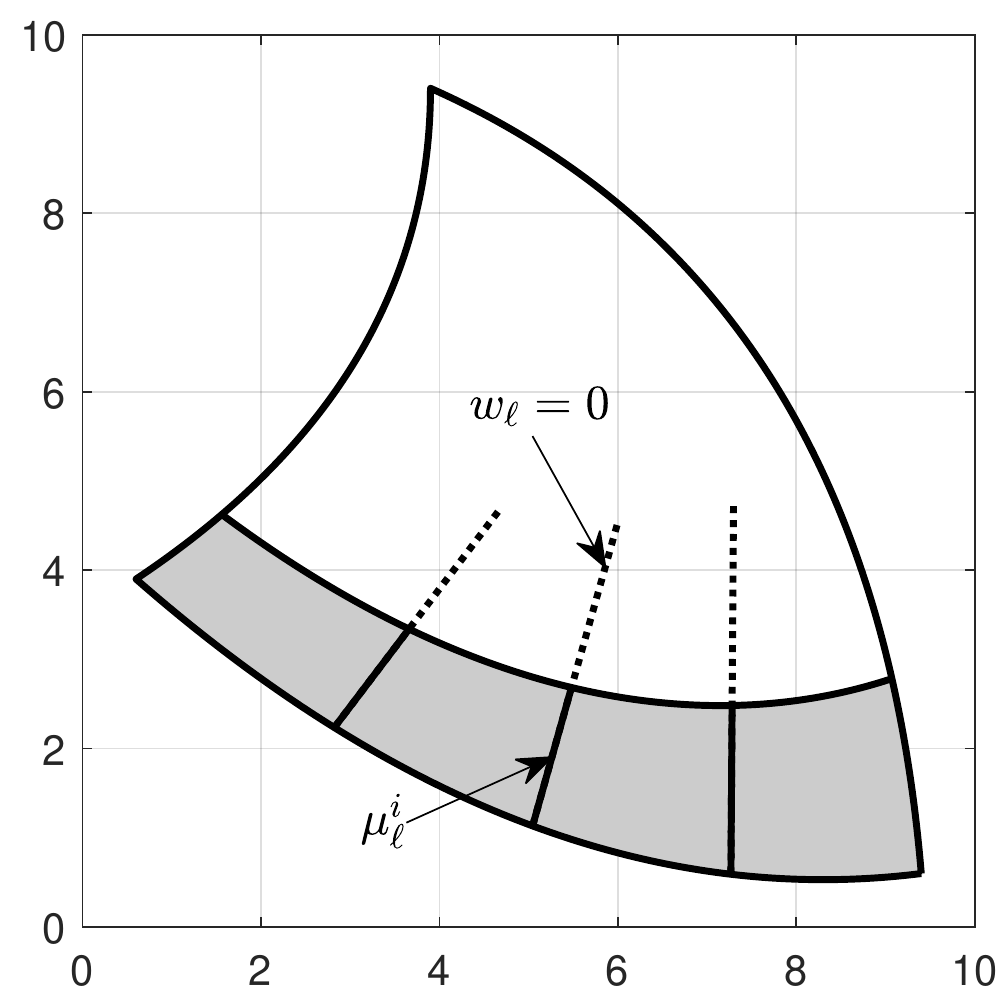}
\end{center}
\caption{Genral {\it (left)} and straight {\it (right)}
levelsets $\mu_\ell^i$. }
\label{Levelsets}
\end{figure}

First,
for each knot $u_\ell^i \in U_\ell$,
choose a direction $\delta_\ell^i \in \R^2$ such that 
\[
 \br_\ell(u_\ell^i,t) \approx 
 \bb(\gamma_\ell(u_\ell^i) + t \delta_\ell^i)
\]
for arguments $\gamma_\ell(u_\ell^i) + t \delta_\ell^i$
within the boundary stripe $\Sigma_\ell$.
Denoting by $D\bb^+ := (D\bb^T D\bb)^{-1} D\bb^T$
the pseudo-inverse of the Jacobian $D\bb$, 
one may set in particular
\[
 \delta_\ell^i := D\bb^+(\gamma_\ell(u_\ell^i)) \cdot 
 \partial_2 \br_\ell(u_\ell^i,0)
 .
\]
Second, define the line segment 
\[
 \mu_\ell^i := \bigl\{ \gamma_\ell^i + t \delta_\ell^i : t \in 
\R\bigr\}
 \cap \Sigma_\ell
\]
with direction $\delta_\ell^i$ within the boundary 
stripe $\Sigma_\ell$. 
Third, 
choose sufficiently many pairs of points $(\sigma_\ell,\tau_\ell)$
of the form 
\[
 \sigma_\ell = \gamma_\ell(u_\ell^i) + t \delta_\ell^i
 \in \mu_\ell^i
 ,\quad 
 \tau_\ell = (u_\ell^i,t)
\]
and add them to the set $K_\ell'$ of data to be interpolated.

Using this setup, points $\sigma \in \Gamma$ that are
mapped to the inner knot $u_\ell^i$ of $\br_\ell$ by $p_\ell$ either 
lie on the straight line $\mu_\ell^i$ or are irrelevant since
$w_\ell(\sigma) = 0$.

Assuming that the ribbons $\br_\ell$ are polynomials in the 
second variable and that the level sets $\mu_\ell^i$ are straight, 
as described, we can 
partition the domain $\Gamma$ into an interior part $\Gamma_0$ and 
boundary parts $\Gamma_\ell^i$
by a network of curves, consisting of three different types:
First, there are the segments $\gamma_\ell^i$ of the trimming 
curves $\gamma_\ell$ corresponding to intervals between 
consecutive knots in $U_\ell$.
Second, there are the level sets $\mu_\ell^i$, and third, 
there are auxiliary spline curves $\alpha_\ell^i$, to be chosen at 
will, connecting the endpoints of the level sets. 
Figure~\ref{Partition} illustrates the 
setting.
Restricting the ABC-surface $\ba$ to the subdomains 
$\Gamma_0,\Gamma_\ell^i$ 
yields pieces $\ba_0,\ba_\ell^i$, respectively, which are trimmed 
NURBS surfaces ready for filing with 
standard data formats. They do not have interior breaks aside from a
regular knot grid, and their boundaries are formed 
by spatial spline curves $\ba \circ \gamma_\ell^i$ and planar spline 
curves $\mu_\ell^i,\alpha_\ell^i$.
\begin{figure}[t]
\begin{center}
\includegraphics[width=.55\textwidth]{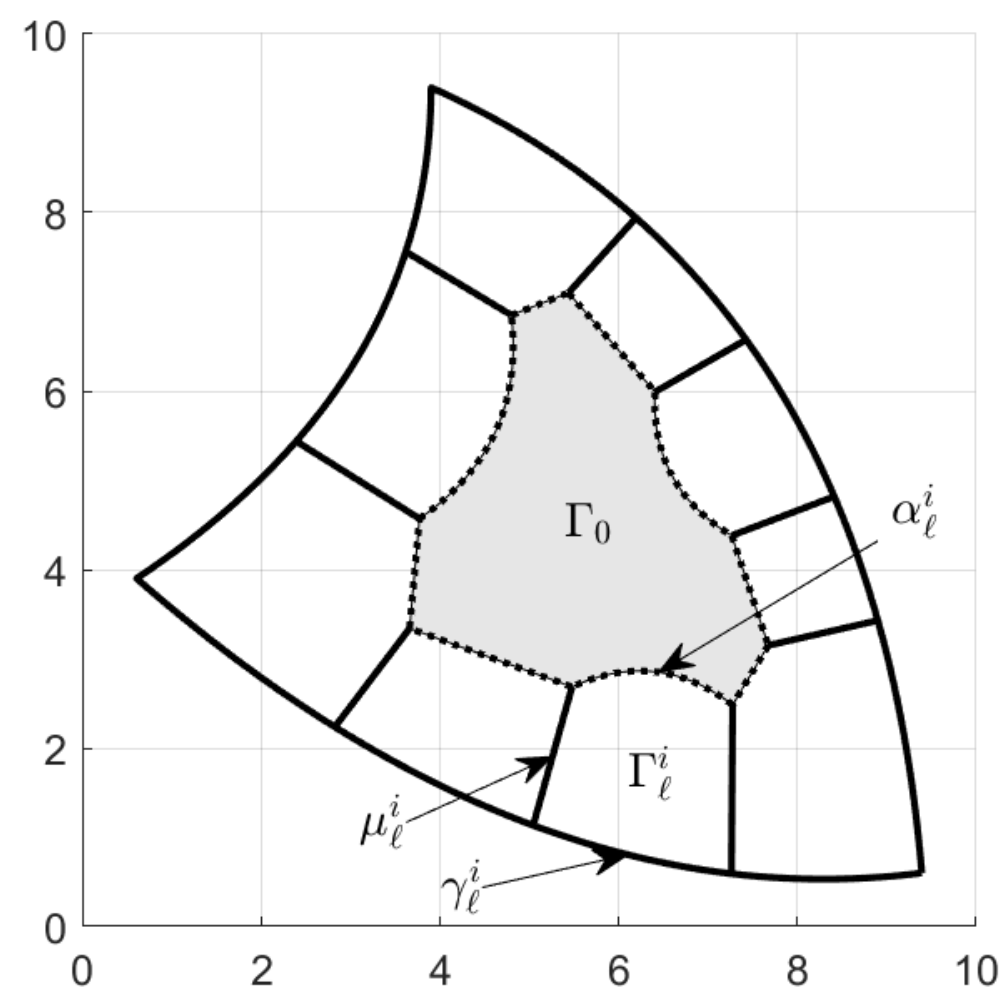}
\end{center}
\caption{Partition of the domain $\Gamma$ into pieces
ready for file export.}
\label{Partition}
\end{figure}


\section{Examples}

The focus of this article is on theory, but still, skipping many
technical details, we 
want to present three examples to illustrate the potential of
our approach. They show in turn composite surfaces with $G^0,G^1$, 
and $G^2$ contact according to theorems~\ref{thm:G0}, \ref{thm:G1},
and \ref{thm:G2}, respectively.
All surfaces, reparametrizations, and weights
are bicubic tensor product splines; all weights are used in 
the version with plateau.

\begin{figure}[t]
\begin{center}
\includegraphics[width=.45\textwidth]{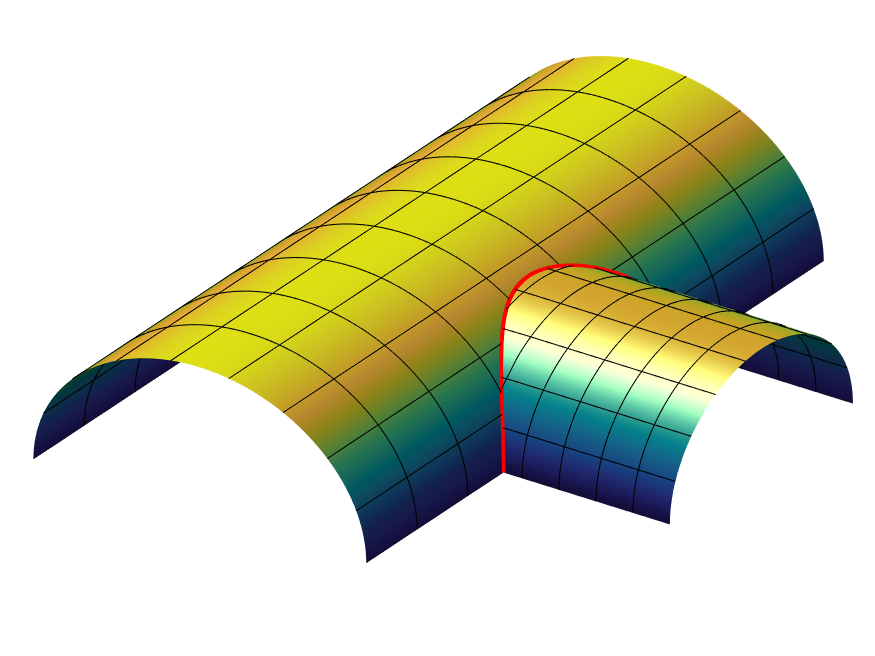}
\hspace{15mm}
\raisebox{7mm}{\includegraphics[width=.35\textwidth]
{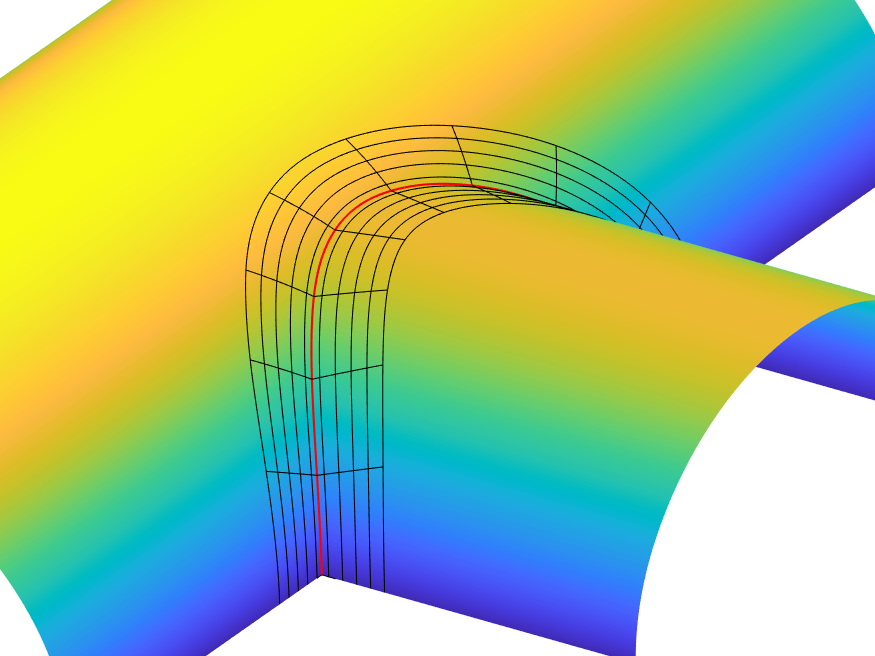}}\phantom{XX}\\
\includegraphics[width=.45\textwidth]{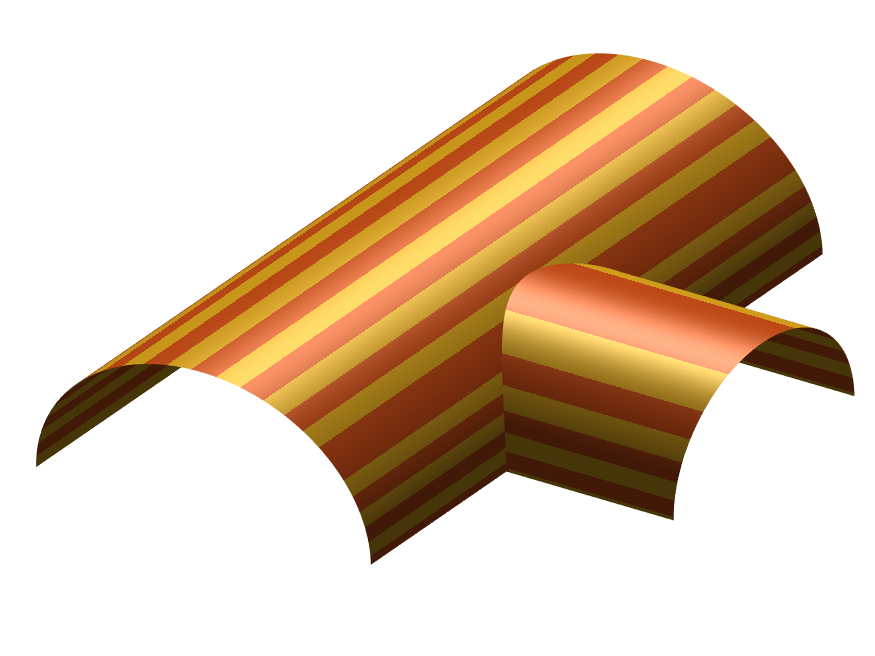}
\qquad 
\includegraphics[width=.45\textwidth]{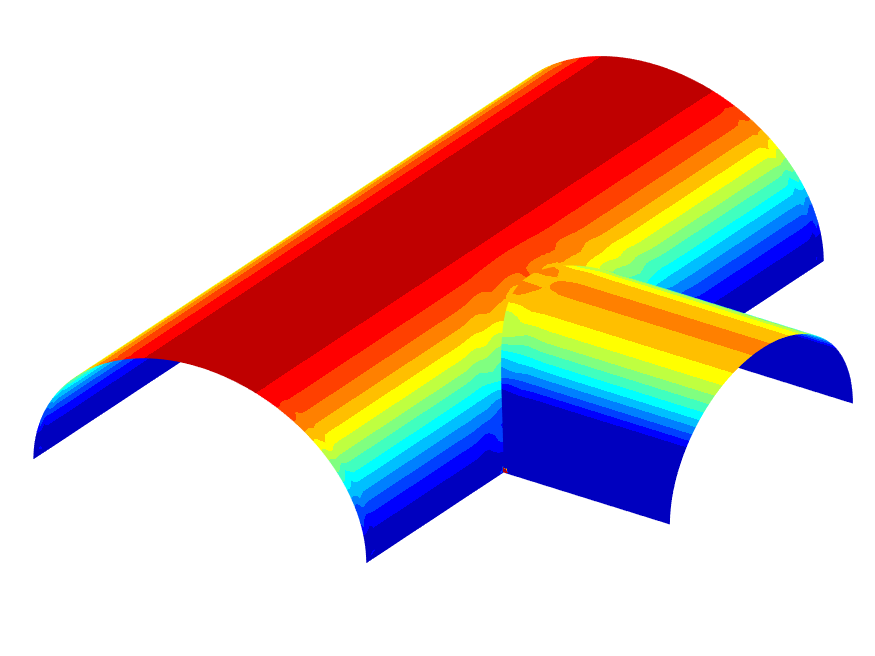}
\end{center}
\caption{Two intersecting cylinders \change{ -- 
paramatric lines {\it (top left)}, 
ribbons near intersection curve {\it (top right)},
isophotes {\it (bottom left)}, 
and mean curvature shading {\it (bottom right)}.
}}

\label{cylinder}
\end{figure}
Figure~\ref{cylinder} presents a typical example from solid modeling:
two intersecting cylinders that trim each other. 
The first plot
shows two ABC-surfaces $\ba^1,\ba^2$ together with a few parameter 
lines, which 
follow the given shapes in the natural way.
The spline representations $\bb^1,\bb^2$ of the two original 
cylinders serve as bases for $\ba^1,\ba^2$, respectively.
The second plot shows the pair of ribbons $\br^1,\br^2$ 
used to model the region
near the intersection curve. Both ribbons have $9$ interior
knots in longitudinal direction.
Their parametrizations are adapted
to the local situation and chosen such that they share a common
curve $\bc = \br^1(\cdot,0) = \br^2(\cdot,0)$.
This curve approximates the exact intersection $\bb^1 \cap \bb^2$
of the 
two original cylinders with high accuracy. The watertight
connection of ribbons is easy to achieve by coinciding 
boundary control points
since both ribbons are in B\'ezier form in transversal direction.
Theorem~\ref{thm:G0} says that this property is inherited 
by the two ABC-surfaces, i.e., they also join seemlessly
at $\bc$.
The third plot shows the impeccable pattern of isophotes, while the 
fourth one is colored by mean curvature.
Only here, one can see that the two surfaces $\ba^1,\ba^2$ slightly
deviate from the perfect cylindrical shape of the original
surfaces. This deviation could be made arbitrarily small 
by choosing finer knot sequences. However, as a matter of fact, it 
is impossible to do without any deviation since the exact intersection
of the original cylinders cannot be represented by a spline
curve--neither in geometry space nor in the parametric domain.

\begin{figure}[t]
\begin{center}
\includegraphics[width=.49\textwidth]{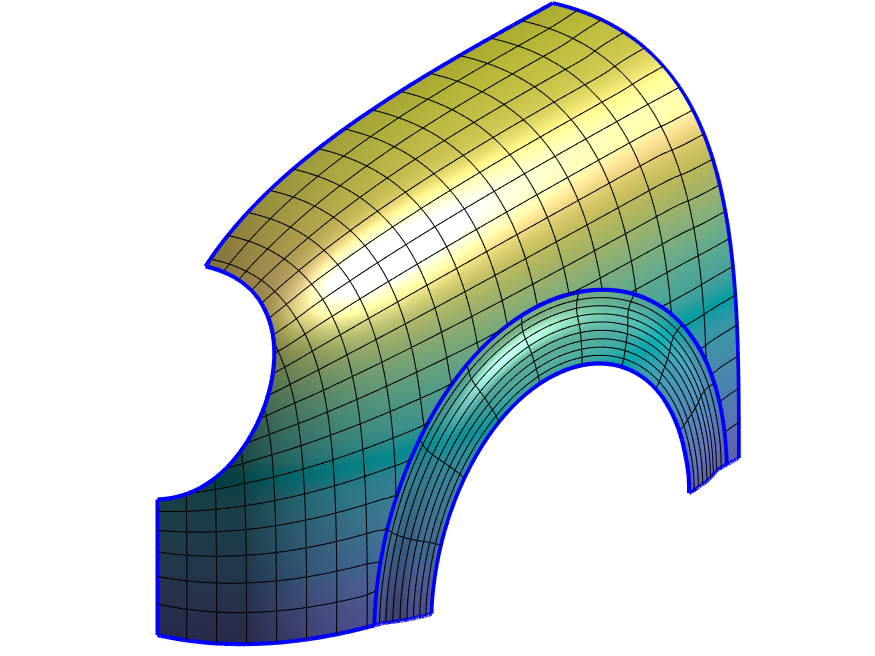}
\includegraphics[width=.49\textwidth]{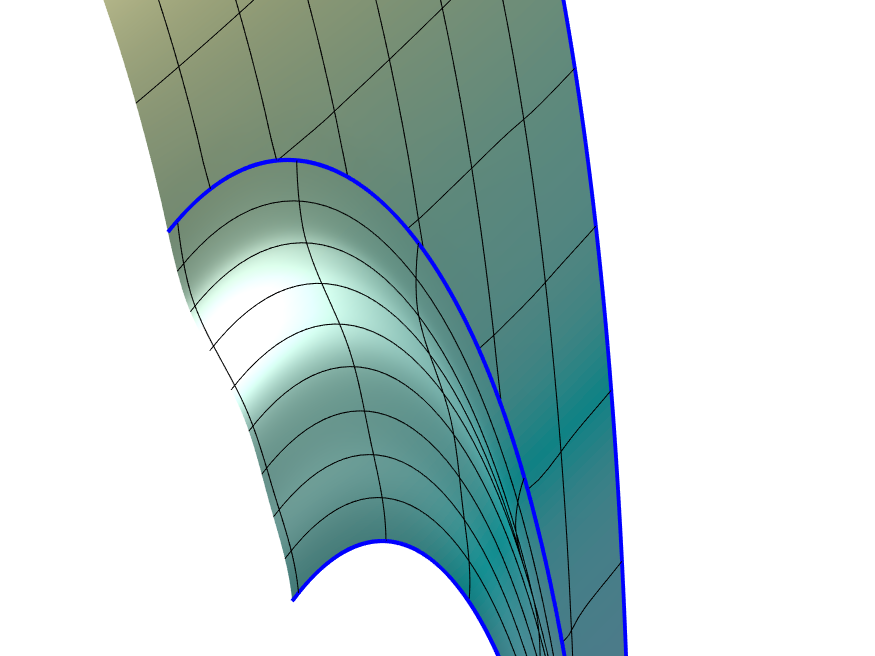}

\medskip

\includegraphics[width=.49\textwidth]{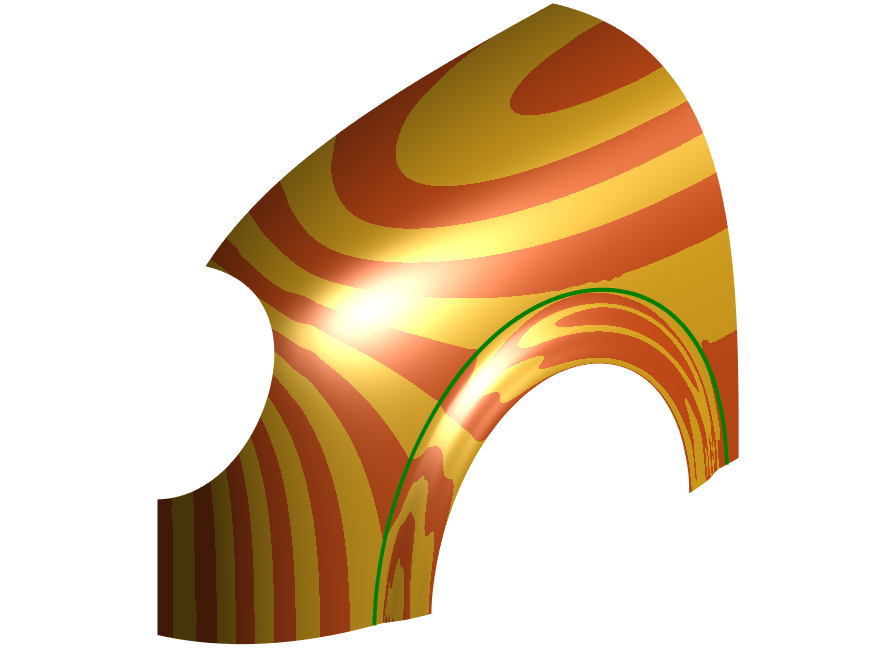}
\includegraphics[width=.49\textwidth]{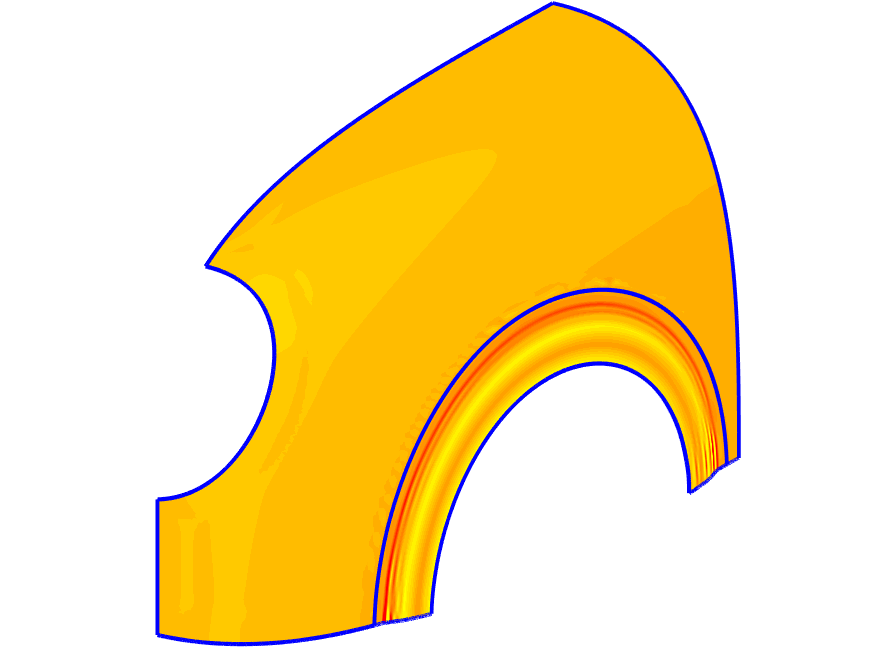}
\end{center}
\caption{Fender of a fictional car \change{ --
whole composite surface {\it (top left)},
zoom on cross section {\it (top right)},
isophotes {\it (bottom left)},
and mean curvature shading {\it (bottom right)}.
}}
\label{car}
\end{figure}
Figure~\ref{car} shows the fender of a fictional car%
\footnote{The geometry shown here was kindly provided 
my Malcolm Sabin, who modeled a similar composite $G^2$-surface
using a completely different, hitherto unpublished technique.}.
The whole geometry consists of a bigger, uniformly curved
part and a bulged-out, annular
area that can be viewed as part of the wheel case.
While it is possible to model the object by a 
single spline surface, it is much more convenient and 
natural to assemble it from two pieces. The ring-shaped
part is modelled by an untrimmed standard spline surface $\bs$.
The other part is an ABC-surface $\ba$, whose ribbon along 
the junction has $C^1$-contact with $\bs$, which is easy to achieve.
This implies coinciding normals along the common boundary,
and hence, by Theorem~\ref{thm:G1}, a $G^1$ contact between 
$\ba$ and $\bs$.
The first plot shows the complete composite surface, while 
the second one features \change{a zoom on} a cross-section profile.
The third and fourth plot show isophotes (which are continuos) and 
mean curvature (which is discontinuous), respectively.

\begin{figure}[t]
\begin{center}
\includegraphics[width=.45\textwidth]{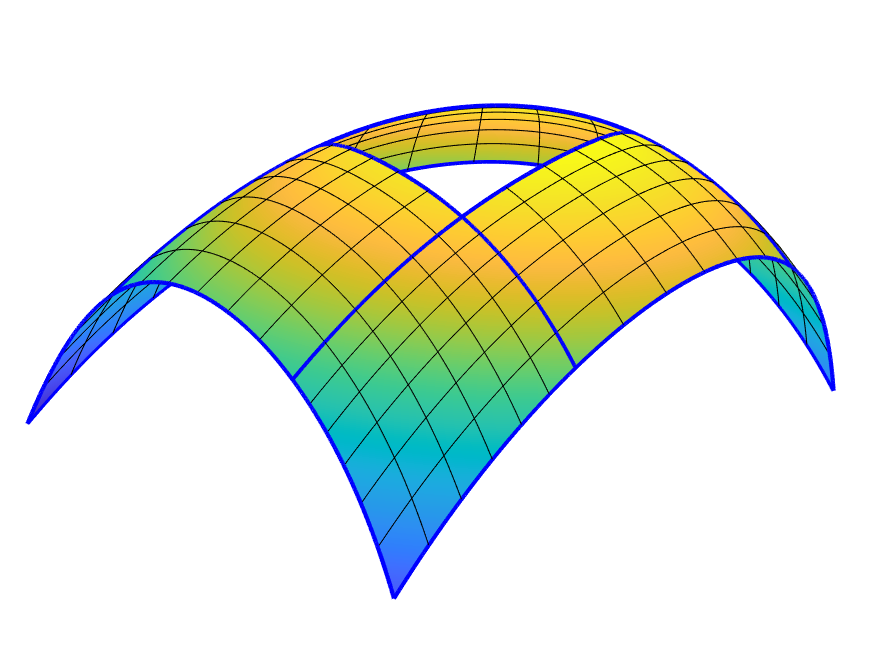}
\qquad
\includegraphics[width=.45\textwidth]{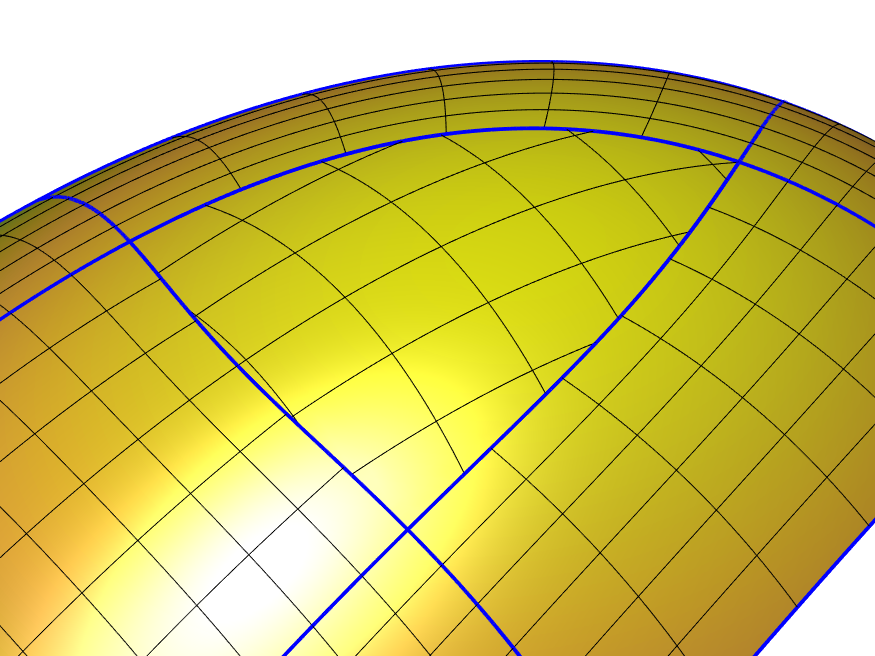}

\includegraphics[width=.45\textwidth]{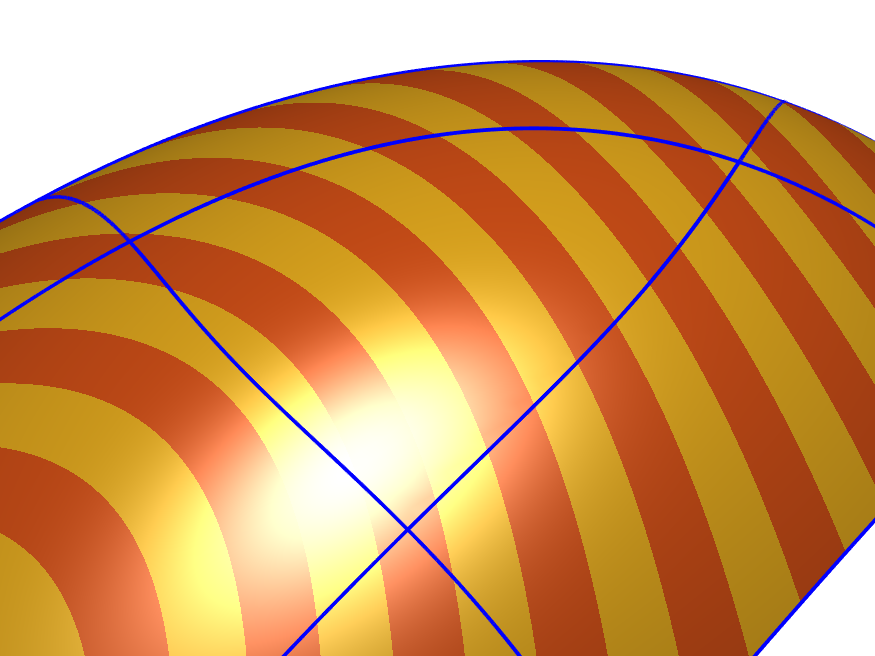}
\qquad 
\includegraphics[width=.45\textwidth]{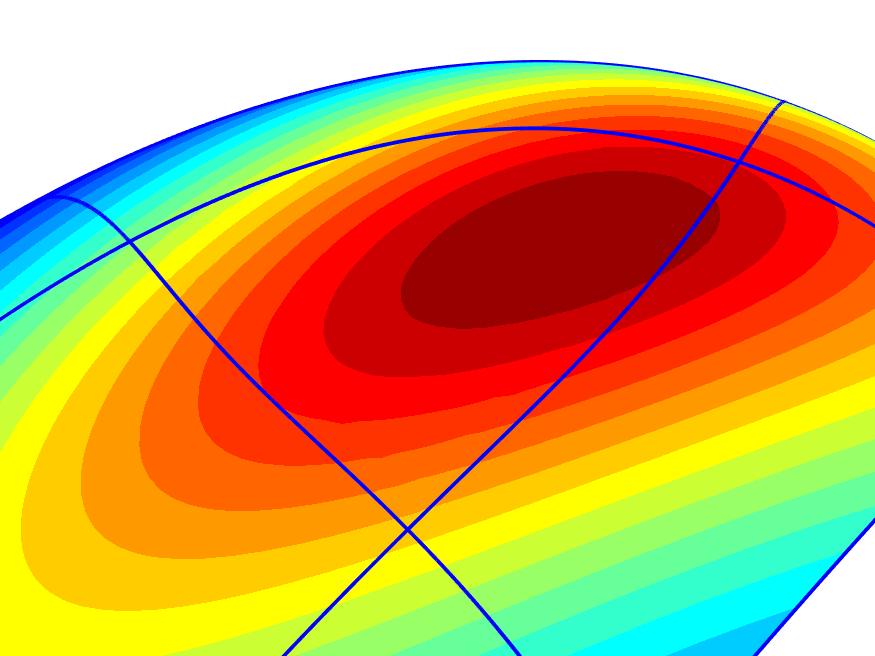}
\end{center}
\caption{Filling of a triangular hole \change{ --
Arrangement of standard patches surrounding the hole {\it (top left)},
surrounding patches and ABC-filling with some parametric lines
{\it (top right)},
isophotes {\it (bottom left)},
and mean curvature shading {\it (bottom right)}.
}}
\label{triangle}
\end{figure}
Figure~\ref{triangle} presents a classical hole-filling 
problem. Several standard spline surfaces are surrounding
a triangular region, and another surface is sought that
closes the gap smoothly. Here, we show that this can 
be achieved by an ABC-surface with three ribbons,
each joining $C^2$ with the corresponding outer part.
By Theorem~\ref{thm:G2}, this guarantees a $G^2$-contact.
The first plot shows the configuration defining the 
three-sided hole, and the second one the completed 
structure together with a few parameter lines.
Isophotes and also the distribution of
mean curvature, as shown by the third and fourth plot,
indicate that the generated shape leaves little 
room for improvement.

\bigskip
\noindent
{\bf Acknowledgement.} We would like to thank Malcolm Sabin 
for fruitful discussions and comments.

\bibliographystyle{alpha}
\bibliography{ref}
\end{document}